\documentclass[11pt]{article}
\usepackage{xspace}
\usepackage{ifthen}
\usepackage{fullpage}
\usepackage[section]{definitions}
\usepackage[dvips]{color}
\usepackage{graphicx}
\usepackage{epsf}

\newcommand{\vct}[1]{\ensuremath{\mathbf{#1}}\xspace}

% Algorithms, Parts of Algorithms, Algorithm Sets
\newcommand{\ALG}{\ensuremath{A}\xspace}
\newcommand{\ALGP}{\ensuremath{A'}\xspace}

\newcommand{\ID}{\ensuremath{\mathrm{id}}\xspace}
\newcommand{\Id}[2]{\ensuremath{\ID(#1,#2)}\xspace}
\newcommand{\PRED}{\ensuremath{g}\xspace}
\newcommand{\Pred}[2]{\ensuremath{\PRED(#1,#2)}\xspace}

\newcommand{\DETALGS}{\ensuremath{\mathcal{D}}\xspace}
\newcommand{\ALLALGS}{\ensuremath{\mathcal{A}}\xspace}
\newcommand{\IDALGS}{\ensuremath{\mathcal{A}^I}\xspace}
\newcommand{\OPTALG}{\ensuremath{\ALG^*}\xspace}
\newcommand{\OPTPRB}{\ensuremath{\PRB^*}\xspace}
\newcommand{\OPTPRED}{\ensuremath{\PRED^*}\xspace}

% Functions, their variations, sets of functions
\newcommand{\FE}[1][]{\ensuremath{%
\ifthenelse{\equal{#1}{}}{f}{f_{#1}}}\xspace}
\newcommand{\Fe}[2][]{\ensuremath{\FE[#1](#2)}\xspace}
\newcommand{\FEP}{\ensuremath{f'}\xspace}
\newcommand{\FeP}[1]{\ensuremath{\FEP(#1)}\xspace}
\newcommand{\FEPP}{\ensuremath{f''}\xspace}
\newcommand{\FePP}[1]{\ensuremath{\FEPP(#1)}\xspace}
\newcommand{\FEA}{\ensuremath{f}\xspace}
\newcommand{\FeA}[1]{\ensuremath{\FEA(#1)}\xspace}
\newcommand{\FEB}{\ensuremath{f'}\xspace}
\newcommand{\FeB}[1]{\ensuremath{\FEB(#1)}\xspace}

\newcommand{\Avg}[1]{\ensuremath{\overline{#1}}\xspace}
\newcommand{\FLIP}[1]{\ensuremath{\underline{#1}}\xspace}
\newcommand{\Flip}[2]{\ensuremath{\FLIP{#1}(#2)}\xspace}
\newcommand{\EQC}[1]{\ensuremath{\tilde{#1}}\xspace}

\newcommand{\LCSET}[1][]{\ensuremath{%
\ifthenelse{\equal{#1}{}}{L}{L_{#1}}}\xspace}
\newcommand{\FLIPSET}{\ensuremath{\underline{L}}\xspace}
\newcommand{\QSET}[1][]{\ensuremath{%
\ifthenelse{\equal{#1}{}}{Q}{Q_{#1}}}\xspace}

% Distributions over functions, sets of distributions
\newcommand{\DISTR}{\ensuremath{q}\xspace}
\newcommand{\Distr}[1]{\ensuremath{\DISTR(#1)}\xspace}
\newcommand{\DISTRP}{\ensuremath{q'}\xspace}
\newcommand{\DistrP}[1]{\ensuremath{\DISTRP(#1)}\xspace}
\newcommand{\DISTRPP}{\ensuremath{q''}\xspace}
\newcommand{\DistrPP}[1]{\ensuremath{\DISTRPP(#1)}\xspace}
\newcommand{\EDISTR}{\ensuremath{\tilde{q}}\xspace}
\newcommand{\EDistr}[1]{\ensuremath{\EDISTR(#1)}\xspace}

\newcommand{\ALLDISTS}{\ensuremath{\mathcal{L}}\xspace}
\newcommand{\UNIDISTS}[1][]{\ensuremath{%
\ifthenelse{\equal{#1}{}}{\mathcal{U}}{\mathcal{U}_{#1}}}\xspace}

% Distributions over sample points

\newcommand{\PRB}{\ensuremath{\vct{p}}\xspace}
\newcommand{\Prb}[1]{\ensuremath{p_{#1}}\xspace}

\newcommand{\PRBQ}{\ensuremath{\vct{q}}\xspace}
\newcommand{\PrbQ}[1]{\ensuremath{q_{#1}}\xspace}

% Errors
\newcommand{\FunAlgErr}[2]{\ensuremath{\Delta(#1,#2)}\xspace}
\newcommand{\DPErr}[3]{\ensuremath{a[#1,#2,#3]}\xspace}
\newcommand{\Err}[3]{\ensuremath{E(#1,#2,#3)}\xspace}
\newcommand{\WErr}[2]{\ensuremath{E_w(#1,#2)}\xspace}
\newcommand{\WErrA}[1]{\ensuremath{E_w(#1)}\xspace}
\newcommand{\WErrd}[2]{\ensuremath{E_w(#1,#2)}\xspace}

\newcommand{\MaxErrQ}[1]{\ensuremath{E_2(#1)}\xspace}
\newcommand{\MaxErr}[1]{\ensuremath{E_1(#1)}\xspace}

\newcommand{\LPERR}{\ensuremath{Z}\xspace}
\newcommand{\LPERRQ}{\ensuremath{\hat{Z}}\xspace}

% Other macros
\newcommand{\METSPACE}{\ensuremath{\mathcal{M}}\xspace}
\newcommand{\DDIM}{\ensuremath{\beta}\xspace}
\newcommand{\DIST}{\ensuremath{d}\xspace}
\newcommand{\Dist}[2]{\ensuremath{\DIST(#1,#2)}\xspace}

\newcommand{\MAXRANGE}{\ensuremath{r}\xspace}

\newcommand{\MEDV}{\ensuremath{m}\xspace}
\newcommand{\MEDP}{\ensuremath{o}\xspace}

\newcommand{\Ring}[1]{\ensuremath{R_{#1}}\xspace}
\newcommand{\PRing}[1]{\ensuremath{\iota(#1)}\xspace}
\newcommand{\RDist}[1]{\ensuremath{2^{-#1}}\xspace}
\newcommand{\GbR}[2]{\ensuremath{o_{#1,#2}}\xspace}
\newcommand{\Gb}[2]{\ensuremath{B_{#1,#2}}\xspace}

\newcommand{\QUANT}{\ensuremath{\gamma}\xspace}

\newcommand{\PEAK}{\ensuremath{b}\xspace}
\newcommand{\ZS}[2]{\ensuremath{Z^{#1}_{#2}}\xspace}

\newcommand{\MinVal}[2]{\ensuremath{{L_{min}}(#1,#2)}\xspace}
\newcommand{\MaxVal}[2]{\ensuremath{{L_{max}}(#1,#2)}\xspace}

\newcommand{\OPT}{\ensuremath{\mathrm{OPT}}\xspace}

\begin{document}

\title{Estimating the Average of a
Lipschitz-Continuous Function from One Sample
}
\vspace{0.8cm}
\author{Abhimanyu Das\\
University of Southern California\\
{\tt abhimand@usc.edu}
\and
David Kempe\thanks{Supported in part by NSF CAREER award 0545855, and
  NSF grant DDDAS-TMRP 0540420}\\
University of Southern California\\
{\tt \dkempeemail}
}

%%\author{Abhimanyu Das \and David Kempe}
%%\institute{University of Southern California,\\
%%\email{abhimand@usc.edu, dkempe@usc.edu}} 

%\begin{titlepage}
%\thispagestyle{empty}
\maketitle

\begin{abstract}
We study the problem of estimating the average of a Lipschitz continuous
function $f$ defined over a metric space, by querying $f$ at only a single point.
More specifically, we explore the role of
randomness in drawing this sample. Our goal is to find a distribution
minimizing the expected estimation error against an adversarially
chosen Lipschitz continuous function.
Our work falls into the broad class of estimating
aggregate statistics of a function from a small number of carefully
chosen samples. The general problem has a wide
range of practical applications in areas as diverse as
sensor networks, social sciences and numerical analysis.
However, traditional work in numerical analysis has focused
on asymptotic bounds, whereas we are interested in the \emph{best}
algorithm.
For arbitrary discrete metric spaces of bounded doubling dimension, we
obtain a PTAS for this problem. In the special case when the points
lie on a line, the running time improves to an FPTAS.
Both algorithms are based on approximately solving a linear program
with an infinite set of constraints, by using an approximate
separation oracle.
For Lipschitz-continuous functions over $[0,1]$, we
calculate the precise achievable error as
$1-\frac{\sqrt{3}}{2} \approx 0.134$, which improves upon the \quarter
which is best possible for deterministic algorithms.
\end{abstract}
%\end{titlepage}

\section{Introduction}
\label{sec:introduction}

One of the most fundamental problems in data-driven sciences is to
estimate some aggregate statistic of a real-valued function 
\FE by sampling \FE in few places. 
Frequently, obtaining samples incurs a cost in terms of
human labor, computation, energy or time. Thus, researchers face
an inherent tradeoff between the accuracy of estimating the aggregate statistic
 and the number of samples required.
With samples a scarce resource, it becomes an important computational
problem to determine where to sample \FE, and how to post-process the
samples.

Naturally, there are many mathematical formulations of this
estimation problem, depending on the aggregate statistic that we 
wish to estimate (such as the average, median or maximum value), the
error objective that we wish to minimize (such as worst-case absolute
error, average-case squared error, etc.), 
and on the conditions imposed on the function. 
In this paper, we study algorithms optimizing a worst-case error objective, 
i.e., we assume that \FE is chosen adversarially.
Motivated by the applications described below, we use 
Lipschitz-continuity to impose a ``smoothness'' condition on \FE.
(Note that without any smoothness conditions on \FE, we cannot hope to 
approximate any aggregate function in an adversarial setting
without learning all function values.) 
That is, we assume that the domain of \FE is a metric space, and that \FE is
Lipschitz-continuous over its domain. Thus, nearby points are
guaranteed to have similar function values.

Here, we focus on perhaps the simplest aggregation function:
the average \Avg{\FE}. Despite its simplicity, it has many natural applications,
such as
\begin{enumerate}
\item In sensor networks covering a geographical area,
the average of a natural phenomenon (such as temperature or pressure)
is frequently one of the most interesting quantities. Here, nearby
locations tend to yield similar measurements. Since energy is a scarce
resource, it is desirable to sample only a few of the
deployed sensors.
\item In population surveys, researchers are frequently interested in
the average of quantities such as income or education level.
A metric on the population may be based on job similarity, which would
have strong predictive value for these quantities. Interviewing a
subject is time-consuming, and thus sample sizes tend to be much
smaller than the entire population.
\item In numerical analysis, one of the most fundamental problems is
numerical integration of a function. If the domain is continuous, this
corresponds precisely to computing the average. If the function to be
integrated is costly to evaluate, then again, it is desirable to 
sample a small number of points.
\end{enumerate}

If \FE is to be evaluated at $k$ points, 
chosen deterministically and non-adaptively, then previous
work \cite{das} shows that the optimum sampling locations
for estimating the average of \FE form a $k$-median of the metric space.
However, the problem becomes significantly more complex when the
algorithm gets to randomize its choices of sampling locations. In
fact, even the seemingly trivial case of $k=1$ turns out to be highly
non-trivial, and is the focus of this paper. Addressing this case 
 is an important step toward the ultimate goal of
  understanding the tradeoffs between the number of samples and the
  estimation error.

Formally, we thus study the following question:
Given a metric space \METSPACE, a randomized \todef{sampling algorithm} is
described by
(1) a method for sampling a location $x \in \METSPACE$ from a distribution \PRB;
(2) a function \PRED for predicting the average \Avg{\FE} of the function \FE 
over \METSPACE, using the sample $(x, \Fe{x})$.
The expected estimation error %for a given function \FE 
is then
%%\Equation{\Err{\PRB}{\PRED}{f}}{%
%%\sum_{x \in \METSPACE} \Prb{x} \cdot \Abs{\Pred{x}{\Fe{x}} - \Avg{\FE}}.}
$\Err{\PRB}{\PRED}{f} = 
\sum_{x \in \METSPACE} \Prb{x} \cdot \Abs{\Pred{x}{\Fe{x}} - \Avg{\FE}}$.
(The sum is replaced by an integral, and \PRB by a density,
if \METSPACE is continuous.)
The worst-case error is
%%\Equation{\WErr{\PRB}{\PRED}}{%
%%\max_{\FE \in \LCSET} \Err{\PRB}{\PRED}{\FE}}
$\WErr{\PRB}{\PRED} = 
\sup_{\FE \in \LCSET} \Err{\PRB}{\PRED}{\FE}$, 
where \LCSET is the set of all $1$-Lipschitz continuous functions defined on $\METSPACE$. 
Our goal is to find a randomized sampling algorithm 
(i.e., a distribution \PRB and function \PRED, 
computable in polynomial time) that (approximately) minimizes
\WErr{\PRB}{\PRED}.

In this paper, we provide a PTAS for this problem of 
minimizing \WErr{\PRB}{\PRED}, for any discrete metric
space \METSPACE with constant doubling dimension. (This 
includes constant-dimensional Euclidean metric spaces.) 
For discrete metric
spaces \METSPACE embedded on a line, we improve this result to an
FPTAS. Both of these algorithms are based on a linear program with
infinitely many constraints, for which an approximate separation
oracle is obtained. 
% These results for the single sample case 
% are an important step towards solving the general problem of choosing 
% a subset of samples to ``best'' estimate an aggregate statistic of \FE.

We next study the perhaps simplest variant of this problem, in
which the metric space is the interval $[0,1]$. While the worst-case
error of any deterministic algorithm is obviously \quarter in this
case, we show that for a randomized algorithm, the bound improves to
$1-\frac{\sqrt{3}}{2}$. We prove this by providing
an explicit distribution, and obtaining a matching lower bound using
Yao's Minimax Principle. Our result can also be interpreted as showing
how ``close'' a collection of Lipschitz-continuous functions on $[0,1]$
must be.

\subsection{Related Work}
Estimating the integral of a smooth function \FE using its values 
at a discrete set of points is one of the core problems in numerical analysis.
The tradeoffs between the number of samples needed and the
estimation error bounds have been investigated in detail under the name of
\todef{Information Based Complexity (IBC)}~\cite{traub2,traub}.
More generally, IBC studies the problem of computing approximations
to an operator $S(f)$ on functions $f$ from a set $F$
(with certain ``smoothness'' properties) using a finite set of samples
$N(f)=[L_1(f), L_2(f), \ldots, L_n(f)]$. The $L_i$ are functionals.
For a given algorithm $U$, its error is
$E(U)=\sup_{f\in F} \Norm{S(f) - U(f)}$.
The goal in IBC is to find an $\epsilon$-approximation $U$ (i.e.,
ensuring that $E(U) \leq \epsilon$) with least information cost $c(U)=n$.

One of the common problems in IBC is multivariate integration of
real-valued functions with a smoothness parameter $r$ over $d$-dimensional
unit balls.
For such problems, Bakhvalov~\cite{bakhvalov} designed a randomized algorithm
providing an $\epsilon$-approximation with cost
$\Theta(\frac{1}{\epsilon^{2d/(d+2r)}})$.
Bakhvalov~\cite{bakhvalov} and Novak~\cite{novak} also show that this 
cost is asymptotically optimal.
The papers by  Novak~\cite{novak} and Mathe~\cite{mathe} show that if $r=0$,
then simple Monte-Carlo integration algorithms
(which sample from the uniform distribution) have an asymptotically
optimal cost of $\frac{1}{\epsilon^2}$.

In \cite{wozniakowski,wozniakowski2}, Wozniakowski studied the average case
complexity of linear multivariate IBC problems,
and derived conditions under which the problems are tractable,
i.e., have cost polynomial in $\frac{1}{\epsilon}$ and $d$.
Wojtaszczyk~\cite{Wojtaszczyk} proved that the multivariate
integration problem is
not strongly tractable (polynomial in $\frac{1}{\epsilon}$ and independent of $d$).

% Lee et al.~\cite{lee} study the related problem of implementation testing to
% decide whether any given algorithm $U(f)$ is an $\epsilon$-approximation
% of $S(f)$. Under the assumption that
% $F$ is a subset of some Hilbert space, they show that finite testing for
% conformance testing on a finite domain $(f_1,f_2,\ldots, f_n)$ is
% inconclusive. However finite tests were shown to be sufficient for
% weak conformance, i.e., to decide whether $\sup_{f\in F} \Norm{S(f) - U(f)}$
% is either less than $\epsilon (1+\alpha)$ or greater than $\epsilon$,
% for some relaxation parameter $\alpha$.

In \cite{baran}, Baran et al.~study the IBC problem in the univariate
integration model for Lipschitz continuous functions, and formulate
approximation bounds in an adaptive setting.
That is, the sampling strategy can change adaptively based on the
previously sampled values.
They provide deterministic and randomized $\epsilon$-approximation algorithms,
which, for any problem instance $P$, use
$O(\log(\frac{1}{\epsilon \cdot \OPT}) \cdot \OPT)$ samples for the deterministic
case and $O(\OPT^{4/3} + \OPT \cdot \log(\frac{1}{\epsilon}))$ samples for the 
randomized case. Here, $\OPT$ is the optimal number of
samples for the problem instance $P$. They prove that their algorithms are
asymptotically optimal, compared to any other adaptive algorithm.

There are two main differences between the results in IBC and our
work: first, IBC treats the target approximation as given and the
number of samples as the quantity to be minimized. Our goal is to
minimize the expected worst-case error with a fixed number of samples
(one). More importantly, results in IBC are traditionally
\emph{asymptotic}, ignoring constants. For a single sample, this would
trivialize the problem: it is implicit in our proofs that sampling at
the metric space's median is a constant-factor approximation to the
best randomized algorithm.

The deterministic version of our problem was studied previously in
\cite{das}. There, it was shown that the best sampling locations for
reading $k$ values non-adaptively constitute the optimal $k$-median of the
metric space. Thus, the algorithm of Arya et al.~\cite{arya}
gives a polynomial-time $(3 + \epsilon)$-approximation algorithm to
identify the best $k$ values to read.

\section{Preliminaries} \label{sec:preliminaries}
We are interested in real-valued
Lipschitz-continuous functions over metric spaces
of constant doubling dimension (e.g., \cite{gupta}). 
Let $(\METSPACE,\DIST)$ be a compact metric space with distances
\Dist{x}{y} between pairs of points. W.l.o.g., we assume that
$\max_{x,y \in \METSPACE} \Dist{x}{y} = 1$.
We require $(\METSPACE,\DIST)$ to have constant doubling dimension
\DDIM, i.e., for every $\delta$, each ball of diameter $\delta$ can be
covered by at most $c^\DDIM$ balls of diameter $\delta/c$, for any $c \geq 2$.

A real-valued function \FE is Lipschitz-continuous (with constant 1)
if $\Abs{\Fe{x} - \Fe{y}} \leq \Dist{x}{y}$ for all points $x, y$. We define 
\LCSET to be the set of all such Lipschitz-continuous functions \FE, i.e., 
$\LCSET = \Set{\FE}{\Abs{\Fe{x} - \Fe{y}} \leq \Dist{x}{y} 
\mbox{ for all } x,y}$. Since we will frequently want to bound the
function values, we also define
$\LCSET[c] = \Set{\FE \in \LCSET}{\Abs{\int_x \Fe{x} dx} \leq c}$.
Notice that \LCSET[c] is a compact set.

We wish to predict the average $\Avg{\FE} = \int_x \Fe{x} dx$ of
all the function values. 
When \METSPACE is finite of size $n$, then the average is of course
$\Avg{\FE} = \frac{1}{n} \cdot \sum_x \Fe{x}$ instead.
The algorithm first gets to choose a single point $x$ according to a
(polynomial-time computable) density function \PRB;
it then learns the value \Fe{x}, and may post-process it with 
a \todef{prediction function} \Pred{x}{\Fe{x}} to produce its estimate
of the average \Avg{\FE}.
The goal is to minimize the expected estimation error of the average,
assuming \FE is chosen adversarially from \LCSET with knowledge of the
algorithm, but not its random choices.
Formally, the goal is to minimize
$\WErrd{\PRB}{\PRED} = \sup_{\FE \in \LCSET} 
(\int_x \Prb{x} \cdot \Abs{\Avg{\FE} - \Pred{x}{\Fe{x}}} dx)$.
If \METSPACE is finite, then \PRB will be a probability distribution
instead of a density, and the error can now be written as  
$\WErrd{\PRB}{\PRED}= \sup_{\FE \in \LCSET} 
(\sum_x \Prb{x} \cdot \Abs{\Avg{\FE} - \Pred{x}{\Fe{x}}})$.

Formally, we consider an algorithm to be the pair $(\PRB, \PRED)$ of
the distribution and prediction function. Let \ALLALGS denote the set
of all such pairs, and \DETALGS the set of all \emph{deterministic}
algorithms, i.e., algorithms for which \PRB has all its density on a
single point.
Our analysis will make heavy use of Yao's Minimax
Principle \cite{motwani:raghavan}. To state it, we
define \ALLDISTS to be the set of all probability distributions over
\LCSET.  We also define the estimation error
$\FunAlgErr{\FE}{\ALG}=\int_x \Prb{x} \cdot \Abs{\Avg{\FE} - \Pred{x}{\Fe{x}}} dx$,
where \ALG corresponds to the pair $(\PRB, \PRED)$. 

\begin{theorem}[Yao's Minimax Principle \cite{motwani:raghavan}]
\[ \begin{array}{lcl}
\sup_{\DISTR \in \ALLDISTS} \inf_{\ALG \in \DETALGS}
\Expect[\FE \sim \DISTR]{\FunAlgErr{\FE}{\ALG}}
 &=& \inf_{\ALG \in \ALLALGS} \sup_{\FE \in \LCSET}
\FunAlgErr{\FE}{\ALG}.
\end{array} \]

\iffalse
\Equation{%
\sup_{\DISTR \in \ALLDISTS} \min_{\ALG \in \DETALGS} 
\Expect[\FE \sim \DISTR]{\FunAlgErr{\FE}{\ALG}}
}{%
\min_{\ALG \in \ALLALGS} \sup_{\FE \in \LCSET}
\FunAlgErr{\FE}{\ALG}.
}
\fi
\end{theorem}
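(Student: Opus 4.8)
The claim is the minimax theorem applied to the zero‑sum game in which the algorithm picks $\ALG \in \ALLALGS$, the adversary picks $\FE \in \LCSET$, and the payoff is $\FunAlgErr{\FE}{\ALG}$; the only real work is that neither strategy set is compact. I would proceed in three steps.

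\emph{Reducing to a clean minimax equality.} For fixed $\DISTR \in \ALLDISTS$, the map $\ALG=(\PRB,\PRED) \mapsto \Expect[\FE \sim \DISTR]{\FunAlgErr{\FE}{\ALG}} = \int_x \Prb{x}\,\Expect[\FE \sim \DISTR]{\Abs{\Avg{\FE}-\Pred{x}{\Fe{x}}}}\,dx$ is a $\PRB$‑average over the sampling location $x$ of the quantities $\inf_{h}\Expect[\FE \sim \DISTR]{\Abs{\Avg{\FE}-h(\Fe{x})}}$ obtained by optimizing the prediction $h(\cdot)=\Pred{x}{\cdot}$ at each point separately; hence its infimum over $\ALLALGS$ is attained by a point mass at the best location together with the pointwise‑optimal prediction there, so $\inf_{\ALG \in \DETALGS}$ and $\inf_{\ALG \in \ALLALGS}$ of this expected error agree. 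Dually, $\sup_{\DISTR \in \ALLDISTS}\Expect[\FE \sim \DISTR]{\FunAlgErr{\FE}{\ALG}} = \sup_{\FE \in \LCSET}\FunAlgErr{\FE}{\ALG}$, as a supremum of an average is approached by point masses. After these two identifications the theorem becomes $\sup_{\DISTR}\inf_{\ALG \in \ALLALGS}\Expect[\FE \sim \DISTR]{\FunAlgErr{\FE}{\ALG}} = \inf_{\ALG \in \ALLALGS}\sup_{\DISTR}\Expect[\FE \sim \DISTR]{\FunAlgErr{\FE}{\ALG}}$, whose ``$\le$'' half is the trivial weak‑duality inequality $\sup\inf\le\inf\sup$.

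\emph{Compactification.} For the reverse inequality I would invoke a minimax theorem for infinite games (e.g.\ Sion's), which requires a compact convex strategy set and suitable semicontinuity of the bilinear payoff. Sampling at a $1$‑median $\MEDP$ of $\METSPACE$ and predicting the observed value has error $\Abs{\Avg{\FE}-\Fe{\MEDP}} = \Abs{\int_x(\Fe{x}-\Fe{\MEDP})\,dx} \le \int_x \Dist{x}{\MEDP}\,dx \le 1$ for every $\FE$ — an expression unaffected by additive shifts of $\FE$ — so the value of the game is at most $1$. This lets me (a) clamp every prediction $\Pred{x}{v}$ to the interval $[v-1,v+1]$ without ever increasing an error, since $\Avg{\FE}\in[\Fe{x}-\int_y \Dist{x}{y}\,dy,\ \Fe{x}+\int_y \Dist{x}{y}\,dy]\subseteq[\Fe{x}-1,\Fe{x}+1]$, which makes $\FunAlgErr{\cdot}{\cdot}$ uniformly bounded; and (b) restrict the adversary's distributions to be supported on $\LCSET[c]$ for a large constant $c$ — truncating $\DISTR$ to $\{\Abs{\Avg{\FE}}\le c\}$ perturbs it negligibly in total variation, and $\DISTR \mapsto \inf_\ALG \Expect[\FE\sim\DISTR]{\FunAlgErr{\FE}{\ALG}}$ is Lipschitz in total variation once errors are bounded, so no value is lost as $c\to\infty$. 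By Arzel\`a--Ascoli, $\LCSET[c]$ is compact in the sup norm, so the set of probability measures on it is weak‑$*$ compact and convex, and, since $\Fe{x}$ and $\Avg{\FE}$ depend continuously on $\FE$ in the sup norm, the payoff is weak‑$*$ continuous in $\DISTR$.

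\emph{Finishing.} With one player's mixed strategy set compact and convex, the other's convex (the algorithm's mixed strategies being distributions over $\DETALGS$, which by the first step have the same power as $\ALLALGS$), and the payoff bilinear and appropriately continuous, Sion's minimax theorem gives $\inf_\ALG\sup_\DISTR = \sup_\DISTR\inf_\ALG$ for the truncated game; letting $c\to\infty$ and undoing the reductions of the first step yields the theorem. The main obstacle is the compactification: verifying rigorously that the additive‑shift freedom in $\LCSET$ and the unboundedness of prediction functions are harmless on \emph{both} sides of the equality, and choosing topologies on the spaces of measures (and of prediction functions) under which the payoff stays affine in each argument while being continuous in the compact one. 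Once the game is genuinely compact, the minimax step itself is routine.
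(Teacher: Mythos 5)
The paper offers no proof of this statement at all: Yao's principle is quoted as a black-box citation to Motwani--Raghavan, so there is nothing in the paper to compare your argument against, and your attempt is in that sense more than the authors provide. Judged on its own terms, your outline has the right architecture --- the two collapse lemmas (against a fixed $\DISTR$ the best response may be taken deterministic; the worst case over $\ALLDISTS$ is attained on point masses), weak duality for free, and a compactness-based minimax theorem for the hard direction --- and the first two steps are correct. But the hard direction is not actually completed, and the place where it breaks is precisely the step you yourself flag as ``the main obstacle'' without resolving it.

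Concretely: Sion's theorem applied to the truncated game yields, for each $c$, an algorithm $\ALG$ with $\sup_{\FE \in \LCSET[c]} \FunAlgErr{\FE}{\ALG} \leq \sup_{\DISTR \in \ALLDISTS} \inf_{\ALG \in \DETALGS} \Expect[\FE \sim \DISTR]{\FunAlgErr{\FE}{\ALG}} + \epsilon$. To conclude you need $\sup_{\FE \in \LCSET}$, not $\sup_{\FE \in \LCSET[c]}$, on the left-hand side, and these can differ for a general (even clamped) prediction function, because $\PRED$ need not commute with vertical shifts of $\FE$: every $\FE \in \LCSET$ is a shift of a member of $\LCSET[1]$, but an algorithm tuned to $\LCSET[c]$ can be arbitrarily bad on shifted copies of those same functions. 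Your truncation argument on the adversary's side (total-variation-negligible perturbation, Lipschitz dependence of the value) is fine, but it only controls $\sup_{\DISTR}\inf_{\ALG}$; it does not bound $\inf_{\ALG}\sup_{\FE \in \LCSET}$ by the truncated game's value. Closing this gap requires showing that one may restrict, up to $\epsilon$, to shift-equivariant algorithms (e.g.\ $\PRED = \ID$) --- which is essentially the content of the paper's Theorem \ref{thm:randpredictionfunction}, whose proof in turn \emph{invokes} Yao's principle, so you must take care not to argue in a circle. A cleaner route is to establish the equivariance reduction directly (symmetrize $\PRED$ against an equivalence-uniform prior, as in the paper's proof of that theorem, but without appealing to minimax) before compactifying, or to set up the game on the quotient $\LCSET[0]$ from the start.
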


We first show that without loss of
generality, we can focus on algorithms 
whose post-processing is just to output the observed value, i.e.,
algorithms $(\PRB, \ID)$ with $\Id{x}{y} = y$, for all $x,y$.
When \PRED is the identity function, we simply write 
$\FunAlgErr{\FE}{\PRB}=\int_x \Prb{x} \cdot \Abs{\Avg{\FE} - \Fe{x}} dx$
for the error incurred by using the distribution \PRB.
%Let \IDALGS denote the set of all such (randomized) algorithms.

\begin{theorem} \label{thm:randpredictionfunction}
Let $\OPTALG = (\OPTPRB, \OPTPRED)$ be the optimum randomized
algorithm. Then, for every $\epsilon > 0$, there is a randomized
algorithm $\ALG = (\PRB, \ID)$, such that 
$\WErrA{\ALG} \leq \WErrA{\OPTALG} + \epsilon$.
\end{theorem}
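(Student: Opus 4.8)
The plan is to show that the optimal prediction function $\OPTPRED$ can be replaced by the identity at a cost of at most $\epsilon$ in the worst-case error, by exploiting the affine structure of the problem together with symmetry of the function class $\LCSET$ under adding constants and under negation. First I would observe that $\LCSET$ is closed under the transformations $\FE \mapsto \FE + c$ (for any constant $c$) and $\FE \mapsto -\FE$; moreover $\Avg{\FE+c} = \Avg{\FE} + c$ and $\Fe{x}$ shifts the same way, so the error $\Abs{\Avg{\FE} - \Pred{x}{\Fe{x}}}$ is, for a "reasonable" predictor, invariant under these shifts. This suggests that the only information in the sample $(x, \Fe{x})$ that a predictor can usefully use is $\Fe{x}$ itself, essentially as an additive offset.

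The key step is a \emph{smoothing/averaging} argument on the predictor. Fix the sampling distribution $\OPTPRB$. For the predictor, I would define a new predictor by averaging $\OPTPRED$ over shifts: roughly, $\PRED'(x,y) = y + \mathbb{E}_{c}[\OPTPRED(x, v(x)+c) - (v(x)+c)]$ for a suitable reference function, or more cleanly, argue that since for each fixed $x$ the map $y \mapsto \OPTPRED(x,y) - y$ can be assumed (by the shift symmetry above, integrated against an adversary that shifts $\FE$) to be a constant $\theta_x$ depending only on $x$ — any non-constant dependence on $y$ only helps the adversary, who controls $\Fe{x}$ subject only to Lipschitz constraints relative to the rest of the (unseen) function. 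Then $\PRED(x,y) = y + \theta_x$ for constants $\theta_x$. Finally I would absorb the offsets $\theta_x$: the algorithm $(\PRB, \ID)$ that simply reports $\Fe{x}$ has error $\sup_\FE \int_x \Prb{x} \Abs{\Avg{\FE} - \Fe{x}}\,dx$, and I claim this is no worse than $\sup_\FE \int_x \Prb{x}\Abs{\Avg{\FE} - \Fe{x} - \theta_x}\,dx$ up to $\epsilon$ — because by negation symmetry the adversary can force the two-sided error, so shifting by a fixed $\theta_x$ (which the adversary can partially cancel by choosing the sign and offset of $\FE$) cannot reduce the sup; the $\epsilon$ slack handles any compactness/approximation technicalities in passing from $\sup$ over $\LCSET$ (which is not compact) to $\LCSET[c]$ for large $c$.

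The main obstacle, I expect, is making the "non-constant dependence on $y$ only helps the adversary" step rigorous. The subtlety is that the adversary picks $\FE$ \emph{before} the random $x$ is drawn, so $\Fe{x}$ is not freely choosable pointwise; it is constrained to be one coherent Lipschitz function. The right way around this is to use Yao's Minimax Principle (quoted above): instead of reasoning about the predictor against a single adversarial $\FE$, pass to a distribution $\DISTR$ over $\LCSET$ and deterministic algorithms. Against a distribution that is symmetric under $\FE \mapsto -\FE + 2\Avg{\FE}$ (reflection about the mean) and under shifts, the best deterministic predictor at a point $x$, having observed $y = \Fe{x}$, is forced into a posterior over $\Avg{\FE}$ that is symmetric about $y$ up to a fixed shift, so its optimal guess is $y + \theta_x$; averaging such symmetrized hard distributions and taking $\theta_x \to 0$ (or absorbing it) gives the bound. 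I would structure the proof as: (i) reduce to predictors of the form $y + \theta_x$ via shift-symmetry of $\LCSET$; (ii) reduce to $\theta_x \equiv 0$ via negation-symmetry, using Yao to handle the order-of-quantifiers issue and the $\epsilon$ to handle the non-compactness of $\LCSET$; (iii) conclude $\WErrA{(\PRB,\ID)} \le \WErrA{\OPTALG} + \epsilon$.
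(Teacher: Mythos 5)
Your proposal follows essentially the same route as the paper's proof: apply Yao's Minimax Principle, symmetrize the adversary's distribution over the group of vertical shifts and reflections (the paper's ``equivalence-uniform'' distributions over the truncated class $\LCSET[\MAXRANGE]$, which is exactly where your $\epsilon$ slack for non-compactness enters), and observe that against such a distribution the posterior on $\Avg{\FE}$ given the observation $(x,\Fe{x})$ is symmetric about $\Fe{x}$, so the identity predictor is optimal. The paper implements your ``symmetric posterior'' step via an explicit pairing of each $\FE$ with its reflection $\Flip{\FE}{y}=2\Fe{x}-\Fe{y}$ about the sampled value, which avoids your intermediate offsets $\theta_x$ entirely, but the underlying idea is the same.
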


\begin{proof}
Let \IDALGS denote the set of all (randomized) algorithms using
the identity function for post-processing, i.e.,
$\IDALGS = \Set{\ALG=(\PRB,\ID)}{\PRB \mbox{ is a distribution over \METSPACE}}$.

For the analysis, we are interested in equivalence classes of
functions; we say that $\FE, \FEP$ are \todef{equivalent} if either
(1) there exists a constant $c$ such that
$\Fe{x} = c + \FeP{x}$ for all $x$, or 
(2) there is a constant $c$ such that 
$\Fe{x} = c- \FeP{x}$ for all $x$. 
Let \EQC{\FE} denote the equivalence class of \FE, i.e., the set of
all vertical translations of \FE and its vertical reflection.

Given $\epsilon$, we let \MAXRANGE be a large enough constant
%(depending on $\epsilon$ and \DISTR) 
defined below.
%We will consider distributions on \LCSET[\MAXRANGE]. 
A distribution \DISTR
over \LCSET[\MAXRANGE] is called \todef{equivalence-uniform} if the
distribution, restricted to any equivalence class, is
uniform\footnote{Unfortunately, this definition does not extend to
  \LCSET, since \EQC{\FE} is not bounded, and a uniform distribution
  is thus not defined. This issue causes the $\epsilon$ terms in
  the theorem.}.
That is, for any $\FEP \in \EQC{\FE}$ with $\FEP, \FE \in \LCSET[\MAXRANGE]$, 
we have $\Distr{\FEP} = \Distr{\FE}$.
Let \UNIDISTS[\MAXRANGE] denote the set of all equivalence-uniform distributions
over \LCSET[\MAXRANGE].
We will show two facts:
\begin{enumerate}
\item If $\DISTR \in \UNIDISTS[\MAXRANGE]$, then for any deterministic
  algorithm $\ALG \in \DETALGS$, there is a deterministic algorithm
  $\ALGP \in \DETALGS \cap \IDALGS$ which outputs simply the value it
  sees, such that
\LEquation[eqn:identity-alg]{%
\Expect[\FE \sim \DISTR]{\FunAlgErr{\FE}{\ALGP}}}{%
\Expect[\FE \sim \DISTR]{\FunAlgErr{\FE}{\ALG}} + \epsilon/2.}
\item For any distribution $\DISTR \in \ALLDISTS$ of
Lipschitz-continuous functions, there is an equivalence-uniform
distribution $\DISTRP \in \UNIDISTS[\MAXRANGE]$ (where \MAXRANGE may
depend on \DISTR) such that
\LEquation[eqn:symmetric-dist]{%
\inf_{\ALG \in \DETALGS \cap \IDALGS} \Expect[\FE \sim \DISTR]{\FunAlgErr{\FE}{\ALG}}}{%
\inf_{\ALG \in \DETALGS \cap \IDALGS} \Expect[\FE \sim \DISTRP]{\FunAlgErr{\FE}{\ALG}} + \epsilon/2.}
\end{enumerate}

Using these two inequalities, and applying Yao's Minimax Theorem
twice then completes the proof as follows:

%\[ \begin{array}{lcl}
\begin{eqnarray*}
\inf_{\ALG \in \IDALGS} \sup_{\FE \in \LCSET} \FunAlgErr{\FE}{\ALG}
& = & 
\sup_{\DISTR \in \ALLDISTS} \inf_{\ALG \in \DETALGS \cap \IDALGS} 
\Expect[\FE \sim \DISTR]{\FunAlgErr{\FE}{\ALG}}\\
& \stackrel{(\ref{eqn:symmetric-dist})}{\leq} & 
\sup_{\DISTR \in \UNIDISTS[\MAXRANGE]} \inf_{\ALG \in \DETALGS \cap \IDALGS} 
\Expect[\FE \sim \DISTR]{\FunAlgErr{\FE}{\ALG}} + \epsilon/2\\
& \stackrel{(\ref{eqn:identity-alg})}{\leq} & 
\sup_{\DISTR \in \UNIDISTS[\MAXRANGE]} \inf_{\ALG \in \DETALGS} 
\Expect[\FE \sim \DISTR]{\FunAlgErr{\FE}{\ALG}} + \epsilon\\
& \leq & 
\sup_{\DISTR \in \ALLDISTS} \inf_{\ALG \in \DETALGS} 
\Expect[\FE \sim \DISTR]{\FunAlgErr{\FE}{\ALG}} + \epsilon\\
& = & 
\inf_{\ALG \in \ALLALGS} \sup_{\FE \in \LCSET} \FunAlgErr{\FE}{\ALG} + \epsilon.
\end{eqnarray*}
%\end{array} \]

It thus remains to show the two inequalities. We begin with Inequality
(\ref{eqn:identity-alg}). Let $x$ be the point at which \ALG samples
the function. For any function $\FE \in \LCSET[\MAXRANGE]$
let \FLIP{\FE} be the ``flipped'' function around $x$, defined by
$\Flip{\FE}{y} = 2\Fe{x} - \Fe{y}$ for all $y$.
Let \FLIPSET be the set of all functions \FE such that both \FE and
\FLIP{\FE} are in \LCSET[\MAXRANGE]. 
Because $\Avg{\FLIP{\FE}} = 2\Fe{x} - \Avg{\FE}$
 and $\Abs{\Fe{x}-\Avg{\FE}} \leq \half$ for all $x$,
we obtain that $\FLIPSET \supseteq \LCSET[\MAXRANGE-1]$.
Also, because $\FLIP{\FE} \in \EQC{\FE}$ and
$\DISTR \in \UNIDISTS[\MAXRANGE]$, we have $\Distr{\FE} = \Distr{\FLIP{\FE}}$
whenever $\FE \in \FLIPSET$.
We thus obtain that 
%\[ \begin{array}{lcl}
\begin{eqnarray*}
\Expect[\FE \sim \DISTR]{\FunAlgErr{\FE}{\ALG}}
& = & \int_{\FLIPSET} \Abs{\Pred{x}{\Fe{x}} - \Avg{\FE}} \Distr{\FE} d\FE
+ \int_{\Compl{\FLIPSET}} \Abs{\Pred{x}{\Fe{x}} - \Avg{\FE}} \Distr{\FE} d\FE\\
& \geq & 
\half \int_{\FLIPSET} (\Abs{\Pred{x}{\Fe{x}} - \Avg{\FE}} +
                      \Abs{\Pred{x}{\Flip{\FE}{x}} - \Avg{\FLIP{\FE}}})
\Distr{\FE} d\FE\\
& = & 
\half \int_{\FLIPSET} (\Abs{\Pred{x}{\Fe{x}} - \Avg{\FE}} +
                       \Abs{\Pred{x}{\Fe{x}} - \Avg{\FLIP{\FE}}})
\Distr{\FE} d\FE\\
& \geq & 
\half \int_{\FLIPSET} \Abs{\Avg{\FE} - \Avg{\FLIP{\FE}}} \Distr{\FE} d\FE.
\end{eqnarray*}
%\end{array} \]
For the first inequality, we dropped the second integral, and used the
symmetry of the distribution to write the first integral twice and
then regroup. The second step used that by definition, 
$\Flip{\FE}{x} = \Fe{x}$, and the third the inverse triangle inequality.
By definition, \Fe{x} lies between \Avg{\FE} and \Avg{\FLIP{\FE}};
therefore, $\Abs{\Avg{\FE} - \Avg{\FLIP{\FE}}} = 
\Abs{\Avg{\FE} - \Fe{x}} + \Abs{\Fe{x} - \Avg{\FLIP{\FE}}}$, and we can
  further bound
%\[ \begin{array}{lcl}
\begin{eqnarray*}
\Expect[\FE \sim \DISTR]{\FunAlgErr{\FE}{\ALG}}
& \geq & 
\half \int_{\FLIPSET} \Abs{\Avg{\FE} - \Fe{x}} + \Abs{\Fe{x} - \Avg{\FLIP{\FE}}} \Distr{\FE} d\FE\\
& = & 
\int_{\FLIPSET} \Abs{\Avg{\FE} - \Fe{x}} \Distr{\FE} d\FE\\
& \geq & 
\int_{\LCSET[\MAXRANGE]} \Abs{\Avg{\FE} - \Fe{x}} \Distr{\FE} d\FE - \epsilon/2,
\end{eqnarray*}
%\end{array} \]
because symmetry of \DISTR implies that 
$\Prob{\FE \notin \FLIPSET} \leq 1/\MAXRANGE \leq \epsilon/2$ (we will
set $\MAXRANGE \geq 2/\epsilon$), and Lipschitz
continuity implies that $\Abs{\Avg{\FE} - \Fe{x}} \leq 1$ for all $x$.

\smallskip

Next, we prove Inequality (\ref{eqn:symmetric-dist}). Let \DISTR be an
arbitrary distribution, and \MAXRANGE large enough such that 
$\Prob[\FE \sim \DISTR]{\FE \notin \LCSET[\MAXRANGE]} \leq \epsilon/2$.
First, we truncate \DISTR to a distribution \DISTRPP over \LCSET[\MAXRANGE]:
We set $\DistrPP{\FE} = 0$ for all $\FE \notin \LCSET[\MAXRANGE]$, and
renormalize by setting 
$\DistrPP{\FE} =  \frac{1}{\Prob[\FE \sim \DISTR]{\FE \in \LCSET[\MAXRANGE]}}
\cdot \Distr{\FE}$ for $\FE \in \LCSET[\MAXRANGE]$.
Next, we define a distribution \EDISTR over equivalence classes
\EQC{\FE} as 
$\EDistr{\EQC{\FE}} = \int_{\FEP \in \EQC{\FE}} \DistrPP{\FEP} d\FEP$;
finally, let \DISTRP be defined by choosing an equivalence class
\EQC{\FE} according to \EDISTR, and subsequently choosing a member of
$\EQC{\FE} \cap \LCSET[\MAXRANGE]$ uniformly at random; clearly,
\DISTRP is equivalence-uniform.
Let $\ALG \in \argmin_{\ALG \in \DETALGS \cap \IDALGS} 
\Expect[\FE \sim \DISTRP]{\FunAlgErr{\FE}{\ALG}}$ be a deterministic
algorithm with identity post-processing function 
minimizing the expected estimation error for \DISTRP; 
let $x$ be the point at which \ALG samples the function.

Since the algorithm always samples at $x$ and outputs \FeP{x},
the estimation error \Abs{\FeP{x} - \Avg{\FEP}} is the same
for all $\FEP \in \EQC{\FE}$, because all these \FEP are simply
shifted or mirrored from each other.
If used instead on the initial distribution \DISTR, \ALG has expected
estimation error 
%\[ \begin{array}{lcl}
\begin{eqnarray*}
\int_{\LCSET} \Abs{\Fe{x} - \Avg{\FE}} \Distr{\FE} d\FE
& \leq & \Prob[\FE \sim \DISTR]{\FE \notin \LCSET[\MAXRANGE]} 
+ \int_{\LCSET[\MAXRANGE]} \Abs{\Fe{x} - \Avg{\FE}} \DistrPP{\FE} d\FE\\
& \leq & \epsilon/2 + 
\int \int_{\FEP \in \EQC{\FE}} \Abs{\FeP{x} - \Avg{\FEP}} 
\Distr{\FEP} d\FEP d\EQC{\FE}\\
& = & \epsilon/2 + 
\int \Abs{\Fe{x} - \Avg{\FE}} \EDistr{\EQC{\FE}} d\EQC{\FE}\\
& = & \epsilon/2 +
\int \int_{\FEP \in \EQC{\FE}} \Abs{\FeP{x} - \Avg{\FEP}} 
\DistrP{\FEP} d\FEP d\EQC{\FE}\\
& = & \epsilon/2 + 
\int_{\LCSET[\MAXRANGE]} \Abs{\Fe{x} - \Avg{\FE}} \DistrP{\FE} d\FE.
\end{eqnarray*}
%\end{array} \]
The inequality in the first step came from upper-bounding the
estimation error outside \LCSET[\MAXRANGE] by 1, and using that
$\DistrPP{\FE} \geq \Distr{\FE}$.
\end{proof}

% Theorem \ref{thm:randpredictionfunction} allows us to focus,
% without loss of generality, on
% algorithms using the identity for post-processing. For notational
% convenience, we will write $\ALLALGS = \IDALGS$. 
% Also, under this
% assumption about the algorithms, the input distributions are without
% loss of generality from \LCSET[0]. This set has the advantage of being
% compact.

\Omit{
%%%
For a given algorithm with probability distribution \Prb{i} and
prediction function \PRED, we call \SE a \todef{worst-case vector} if
it maximizes $\sum_i \Prb{i} \Abs{\Avg{\SE} - \Pred{i}{\Se{i}}}$.
We will sometimes assume that all vectors \SE are normalized such
that $\Avg{\SE}=0$, in which case 
$\WErrd{\PRB}{\PRED}  = 
\max_{\SE \in \LCSET and \Avg{\SE}=0 }(\sum_{i=1}^n \Prb{x}\Abs{\Pred{i}{\SE{x}}}dx)$. 

We define the index $m$ to be the $1$-median location of all the $n$ points, and
$d_m$ to be the $1$-median distance $\frac{\sum_{i=1}^{n} \Dist{i}{m}}{n}$

For any point $S_j$, we define its maximizing vector $\MV{j}$ as 
$\Mv{j}{i} = \Dist{i}{j}$.

The following definitions below are used for the related problem of
finding the best deterministic algorithm (that samples at a location s, and outputs
an estimate \Pred{s}{\Se{i}} for minimizing the expected error)
in prediction of the average over an input distribution of \LVALID vectors (denoted
by the probability distribution \PRB).

We define \PE{\PRB}{\Alg} to be the expected estimation error for the input 
probability
distribution $\PRB$ over all \LVALID vectors, using a deterministic algorithm \DAlg. 
Also $\PEm{\PRB} = \max_{\DAlg} (\PE{\PRB}{\DAlg})$ is the expected estimation error for 
the distribution \PRB using the best possible deterministic algorithm.

Hence, the aggregate function estimation 
problem can now be rephrased as follows:
\begin{definition}[Discrete aggregate function estimation]
Given a set of $n$ points embedded in a metric space with doubling constant \DDIM
and distance function $\DIST$, 
find the probability distribution \PRB,
which minimizes the
expected worst-case estimation error $\WErrd{\PRB}{\PREDI}=
\max_{\SE \in \LCSET} \Errd{\PRB}{\PREDI}{\SE}$, where
$\Errd{\PRB}{\PREDI}{\SE}=\sum_{i=1}^n \Prb{i} \Abs{\Avg{\SE} - \Se{i}}$ is 
the estimation error for a vector \SE.
\end{definition}
%%%
}

\section{Discrete Metric Spaces} \label{sec:discrete}
In this section, we focus on finite metric spaces, consisting of $n$
points. Thus, instead of integrals and densities, we will be
considering sums and probability distributions. The characterization
of using the identity function for post-processing from 
Theorem \ref{thm:randpredictionfunction} holds in this case as well;
hence, without loss of generality, we assume that all
algorithms simply output the value they observe.
The problem of finding the best probability distribution for a single
sample can be expressed as a linear program, with variables \Prb{x}
for the sampling probabilities at each of the $n$ points $x$, and a
variable \LPERR for the estimation error.

\begin{LP}[eqn:exact-lp]{Minimize}{\LPERR}
(\mathrm{i})\;\;\; \sum_x \Prb{x} = 1\\

(\mathrm{ii})\;\; \sum_x \Prb{x} \cdot \Abs{\Avg{\FE} - \Fe{x}} \leq \LPERR
& \mbox{ for all } \FE \in \LCSET\\

(\mathrm{iii})\; 0 \leq \Prb{x} \leq  1 
& \mbox{ for all points } x
\end{LP}

Since this LP (which we refer to as the ``exact LP'') has infinitely many constraints, our approach is to
replace the set \LCSET in the second constraint with a set 
\QSET[\delta]. We will choose \QSET[\delta] carefully to ensure
that it ``approximates'' \LCSET well, and such that the resulting LP
below (which we refer to as the ``discretized LP'') 
can be solved efficiently.

\begin{LP}[eqn:approx-lp]{Minimize}{\LPERRQ}
(\mathrm{i})\;\;\; \sum_x \Prb{x} = 1\\

(\mathrm{ii})\;\; \sum_x \Prb{x} \cdot \Abs{\Avg{\FE} - \Fe{x}} \leq \LPERRQ
& \mbox{ for all } \FE \in \QSET[\delta]\\

(\mathrm{iii})\; 0 \leq \Prb{x} \leq  1 
& \mbox{ for all points } x
\end{LP}

To define the notion of approximation formally, 
let \MEDP be a 1-median of the metric space, 
i.e., a point minimizing $\sum_x \Dist{\MEDP}{x}$.
Let $\MEDV = \frac{1}{n} \sum_x \Dist{\MEDP}{x}$ be the average
distance of all points from \MEDP. 

Because we assumed w.l.o.g.~that
$\max_{x,y \in \METSPACE} \Dist{x}{y} = 1$, 
at least one point has distance at least \half from \MEDP, and
therefore, $\MEDV \geq \frac{1}{2n}$.
The median value \MEDV forms a lower bound for randomized algorithms in the following sense.

\begin{lemma} \label{lem:lowerboundkd}
The worst-case expected error for any randomized algorithm is at 
least $\frac{1}{4 \cdot 6^\DDIM} \cdot \MEDV$, where \DDIM is the doubling dimension of the metric space.
\end{lemma}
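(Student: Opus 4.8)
The plan is to apply Yao's Minimax Principle after first reducing to algorithms that simply output the observed value. Concretely, Theorem~\ref{thm:randpredictionfunction} (whose statement, as remarked, also holds in the finite setting) gives $\inf_{\ALG\in\IDALGS}\WErrA{\ALG}\le\inf_{\ALG\in\ALLALGS}\WErrA{\ALG}+\epsilon$ for every $\epsilon>0$, while trivially $\IDALGS\subseteq\ALLALGS$; hence the two infima coincide, and it suffices to exhibit, for every sampling distribution \PRB, a single function $\FE\in\LCSET$ with $\sum_x \Prb{x}\cdot\Abs{\Avg{\FE}-\Fe{x}}\ge\frac{1}{4\cdot 6^\DDIM}\cdot\MEDV$. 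In every case the adversarial \FE will be a ``cone'' $\FE=\Dist{\cdot}{w}$ for a suitable point $w$; this is $1$-Lipschitz by the triangle inequality, and, crucially, $\Avg{\FE}=\frac1n\sum_x \Dist{w}{x}\ge\MEDV$ for \emph{every} point $w$, since \MEDP minimizes $\sum_x\Dist{\MEDP}{x}$ over all points of the (finite) metric space.

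I would then split on where \PRB concentrates. If \PRB puts probability at least $\half$ on points $x$ with $\Dist{x}{\MEDP}>2\MEDV$, take $\FE=\Dist{\cdot}{\MEDP}$, so $\Avg{\FE}=\MEDV$; for each such $x$ we get $\Abs{\Avg{\FE}-\Fe{x}}=\Dist{x}{\MEDP}-\MEDV>\MEDV$, and the error is at least $\MEDV/2>\frac{1}{4\cdot 6^\DDIM}\cdot\MEDV$. Otherwise \PRB places more than $\half$ of its mass inside the ball $B(\MEDP,2\MEDV)$, which has diameter at most $4\MEDV$. Using the doubling property with $c=6$, cover $B(\MEDP,2\MEDV)$ by at most $6^\DDIM$ balls of radius at most $\MEDV/3$ (in a finite metric space these covering balls can be taken centered at points of \METSPACE). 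By a union bound, one of them, say $B(w,\MEDV/3)$, carries \PRB-mass $\lambda>\frac{1}{2\cdot 6^\DDIM}$.

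Finally, take $\FE=\Dist{\cdot}{w}$. Since $\Avg{\FE}\ge\MEDV$ and $\Dist{x}{w}\le\MEDV/3<\MEDV$ for $x\in B(w,\MEDV/3)$, we have $\Abs{\Avg{\FE}-\Fe{x}}=\Avg{\FE}-\Dist{x}{w}\ge\MEDV-\MEDV/3\ge\MEDV/2$ for all such $x$, so $\sum_x\Prb{x}\cdot\Abs{\Avg{\FE}-\Fe{x}}\ge\lambda\cdot\MEDV/2>\frac{1}{4\cdot 6^\DDIM}\cdot\MEDV$, which finishes the proof.

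I expect the only genuine subtlety to be running the identity-postprocessing reduction in the direction needed for a \emph{lower} bound (handled above by combining Theorem~\ref{thm:randpredictionfunction} with $\IDALGS\subseteq\ALLALGS$), together with the simple but pivotal fact that every cone $\Dist{\cdot}{w}$ has average at least \MEDV. The covering/pigeonhole step and the constant bookkeeping (choice of $c$, the radius $\MEDV/3$, and bounding $\MEDV-\MEDV/3\ge\MEDV/2$) are routine, and the case split only exists to guarantee that a constant fraction of \PRB's mass can be localized into a small ball whose center automatically has cone-average at least \MEDV.
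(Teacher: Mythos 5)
Your proof is correct and follows essentially the same strategy as the paper's: reduce to identity post-processing, split on where \PRB concentrates its mass, use cone functions $\Dist{\cdot}{w}$ (whose average is at least \MEDV by the median property of \MEDP), and apply the doubling cover plus pigeonhole to trap mass $\geq \frac{1}{2\cdot 6^\DDIM}$ in a small ball. The only difference is cosmetic: the paper's case split uses the ring $\frac{\MEDV}{2}\leq\Dist{x}{\MEDP}\leq\frac{3\MEDV}{2}$ (contained in a ball of diameter $3\MEDV$) where you use the ball of radius $2\MEDV$; both yield the same constant.
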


\begin{emptyproof}
Consider any randomized algorithm with probability distribution \PRB;
w.l.o.g., the algorithm outputs the value it observes.
Let $R = \Set{x}{\frac{\MEDV}{2} \leq \Dist{x}{\MEDP} \leq \frac{3\MEDV}{2}}$
be the ring of points at distance between
$\frac{\MEDV}{2}$ and $\frac{3\MEDV}{2}$ from \MEDP.
We distinguish two cases:

\begin{enumerate}
\item If $\sum_{x \in R} \Prb{x} \leq \half$, then consider the
  Lipschitz-continuous function defined by $\Fe{x} = \Dist{x}{\MEDP}$.
  This function has average $\Avg{\FE} = \MEDV$. 
  With probability at least \half, 
  the algorithm samples a point outside $R$, and thus outputs a value 
  outside the interval $[\frac{\MEDV}{2}, \frac{3\MEDV}{2}]$, which
  incurs error at least $\frac{\MEDV}{2}$.
  Thus, the expected error is at least $\frac{\MEDV}{4}$.
\item If $\sum_{x \in R} \Prb{x} > \half$, then consider 
a collection of balls $B_1, \ldots, B_k$ of diameter $\frac{\MEDV}{2}$
covering all points in $R$. Because $R$ is contained in a ball of
diameter $3 \MEDV$, the doubling constraint implies that
$k \leq 6^\DDIM$ balls are sufficient. At least one of these
balls --- say, $B_1$ --- has $\sum_{x \in B} \Prb{x} \geq \frac{1}{2k}$.
Fix an arbitrary point $y \in B_1$, and define the Lipschitz-continuous
function \FE as $\Fe{x} = \Dist{x}{y}$.
Because \MEDP was a 1-median, we get that $\Avg{\FE} \geq \MEDV$.
With probability at least $\frac{1}{2k}$, the algorithm will choose a
point inside $B_1$ and output a value of at most $\frac{\MEDV}{2}$,
thus incurring an error of at least $\frac{\MEDV}{2}$. 
Hence, the expected error is at least
$\frac{1}{2k} \cdot \frac{\MEDV}{2} \geq \frac{1}{4 \cdot 6^\DDIM}
\cdot \MEDV$.\QED
\end{enumerate}
\end{emptyproof}

We now formalize our notion for a set of functions \QSET[\delta]
to be a good approximation. 

\begin{definition}[$\delta$-approximating function classes]
\label{def:approximating}
For any sampling distribution \PRB, define
$\MaxErr{\PRB} = \max_{\FE \in \LCSET} \FunAlgErr{\FE}{\PRB}$
and 
$\MaxErrQ{\PRB} = \max_{\FE \in \QSET[\delta]} \FunAlgErr{\FE}{\PRB}$
to be the maximum error of sampling according to \PRB against a
worst-case function from \LCSET and \QSET[\delta], respectively.
The class \QSET[\delta] is said to \todef{$\delta$-approximate} \LCSET
if the following two conditions hold:
\begin{enumerate}
\item For each $\FE \in \LCSET$, there is a function
$\FEP \in \QSET[\delta]$ such that
$\Abs{\FunAlgErr{\FEP}{\PRB} - \FunAlgErr{\FE}{\PRB}}
\leq \frac{\delta}{2} \cdot \MaxErr{\PRB}$, 
for all distributions \PRB.
\item For each $\FE \in \QSET[\delta]$, there is a function
$\FEP \in \LCSET$ such that
$\Abs{\FunAlgErr{\FEP}{\PRB} - \FunAlgErr{\FE}{\PRB}}
\leq \frac{\delta}{2} \cdot \MaxErr{\PRB}$, 
for all distributions \PRB.
\end{enumerate}
\end{definition}

\begin{theorem} \label{thm:approximate-LP}
Assume that for every $\delta$, \QSET[\delta] is a class of functions
$\delta$-approximating \LCSET, such that
the following problem can be solved in polynomial time (for fixed $\delta$):
Given \PRB, find a function $\FE \in \QSET[\delta]$ maximizing
$\FunAlgErr{\FE}{\PRB}$.

Then, solving the discretized LP (\ref{eqn:approx-lp}) instead of the 
exact LP
(\ref{eqn:exact-lp}) gives a PTAS for the problem of finding a 
sampling distribution that minimizes the worst-case expected error.
\end{theorem}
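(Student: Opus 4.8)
The plan is to show that the discretized LP (\ref{eqn:approx-lp}) and the exact LP (\ref{eqn:exact-lp}) have objective values that are within a $(1\pm\delta)$ factor of each other, and that an optimal solution to one is a near-optimal solution to the other; combined with the fact that (\ref{eqn:approx-lp}) is solvable in polynomial time via the ellipsoid method with the given separation oracle, this yields the PTAS. Concretely, I would first argue the two-way comparison of LP values. Since \QSET[\delta] $\delta$-approximates \LCSET, for any fixed distribution \PRB the two quantities \MaxErr{\PRB} and \MaxErrQ{\PRB} differ by at most $\frac{\delta}{2}\cdot\MaxErr{\PRB}$: condition 2 of Definition~\ref{def:approximating} gives $\MaxErrQ{\PRB} \le \MaxErr{\PRB} + \frac{\delta}{2}\MaxErr{\PRB}$ (each $\FE\in\QSET[\delta]$ is matched by some $\FEP\in\LCSET$ with nearly the same error, and $\FunAlgErr{\FEP}{\PRB}\le\MaxErr{\PRB}$), and condition 1 gives the reverse inequality $\MaxErr{\PRB} \le \MaxErrQ{\PRB} + \frac{\delta}{2}\MaxErr{\PRB}$. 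Rearranging, $(1-\frac{\delta}{2})\MaxErr{\PRB} \le \MaxErrQ{\PRB} \le (1+\frac{\delta}{2})\MaxErr{\PRB}$ for every \PRB.

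Next I would transfer this pointwise comparison to the optimal LP values. The exact LP computes $\LPERR^* = \min_\PRB \MaxErr{\PRB}$ and the discretized LP computes $\LPERRQ^* = \min_\PRB \MaxErrQ{\PRB}$ (constraint (ii) in each LP just says the error variable dominates the worst-case error over the relevant function class, and we minimize it). From the pointwise bounds, taking minima over \PRB preserves the inequalities: $(1-\frac{\delta}{2})\LPERR^* \le \LPERRQ^* \le (1+\frac{\delta}{2})\LPERR^*$ — more carefully, if $\PRB^*$ minimizes \MaxErr{\cdot} then $\LPERRQ^* \le \MaxErrQ{\PRB^*} \le (1+\frac{\delta}{2})\MaxErr{\PRB^*} = (1+\frac{\delta}{2})\LPERR^*$, and symmetrically with the roles swapped. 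More importantly, let \PRB be an optimal (or near-optimal) solution to the discretized LP; then its \emph{true} worst-case error satisfies $\MaxErr{\PRB} \le \frac{1}{1-\delta/2}\MaxErrQ{\PRB} = \frac{1}{1-\delta/2}\LPERRQ^* \le \frac{1+\delta/2}{1-\delta/2}\LPERR^*$, which is $(1+O(\delta))\LPERR^*$. So the distribution returned by solving the discretized LP is a $(1+O(\delta))$-approximation to the best randomized sampling algorithm; rescaling $\delta$ gives a $(1+\epsilon)$-approximation.

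Finally I would address tractability. The discretized LP has $n+1$ variables but still infinitely many constraints of type (ii), so I would solve it via the ellipsoid method using the hypothesized polynomial-time oracle: given a candidate $(\PRB,\LPERRQ)$, the oracle finds $\FE\in\QSET[\delta]$ maximizing \FunAlgErr{\FE}{\PRB}; if this maximum exceeds \LPERRQ we have a violated constraint (a separating hyperplane, since \FunAlgErr{\FE}{\PRB} is linear in \PRB for fixed \FE with the identity prediction — note $\Abs{\Avg{\FE}-\Fe{x}}$ is a constant coefficient on $\Prb{x}$), otherwise $(\PRB,\LPERRQ)$ is feasible. Standard ellipsoid arguments then give a polynomial-time algorithm returning a solution within any desired additive accuracy, which I fold into the $\epsilon$; I should also note the lower bound $\MEDV \ge \frac{1}{2n}$ from Lemma~\ref{lem:lowerboundkd} (combined with the constant-factor upper bound from sampling at the median) ensures $\LPERR^*$ is bounded away from $0$ by an inverse polynomial, so additive ellipsoid error can be converted to multiplicative error and the running time is polynomial in $n$ and $1/\epsilon$ for fixed $\delta$. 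The main obstacle I anticipate is being careful that the $\delta/2$-slack in Definition~\ref{def:approximating} is stated relative to \MaxErr{\PRB} (the error against \LCSET, not \QSET[\delta]) so that the two inequalities compose cleanly into a clean multiplicative guarantee without circular dependence; the rest is routine ellipsoid-method bookkeeping.
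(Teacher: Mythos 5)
Your proposal is correct and follows essentially the same route as the paper's proof: both use the two conditions of Definition~\ref{def:approximating} to obtain $(1-\frac{\delta}{2})\MaxErr{\PRB} \leq \MaxErrQ{\PRB} \leq (1+\frac{\delta}{2})\MaxErr{\PRB}$, chain these through the optimality of the discretized LP's solution to get a $\frac{1+\delta/2}{1-\delta/2} = 1+O(\delta)$ multiplicative guarantee, and solve the discretized LP with the ellipsoid method using the hypothesized oracle as a separation oracle. Your extra remarks on converting additive ellipsoid accuracy to multiplicative accuracy via the $\MEDV \geq \frac{1}{2n}$ lower bound are sensible bookkeeping that the paper leaves implicit.
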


\begin{proof}
First, an algorithm to find a function $\FE \in \QSET[\delta]$ 
maximizing $\sum_x \Prb{x} \cdot \Abs{\Avg{\FE} - \Fe{x}}$
gives a separation oracle for the discretized LP.
Thus, using the Ellipsoid Method
(e.g., \cite{grotschel}), an optimal solution to
the discretized LP can be found in polynomial time, for any fixed $\delta$.

Let \PRB, \PRBQ be optimal solutions to the exact and discretized LPs,
respectively.
Let $f_1 \in \LCSET$ maximize
$\sum_x \PrbQ{x} \cdot \Abs{\Avg{\FE} - \Fe{x}}$ over $\FE \in \LCSET$,
and $f_2 \in \QSET[\delta]$ maximize
$\sum_x \Prb{x} \cdot \Abs{\Avg{\FE} - \Fe{x}}$ over $\FE \in \QSET[\delta]$.
Thus, $\FunAlgErr{f_1}{\PRBQ} = \MaxErr{\PRBQ}$ and
$\FunAlgErr{f_2}{\PRB} = \MaxErrQ{\PRB}$.

Now, applying the first property from Definition \ref{def:approximating} to
$f_1\in \LCSET$ gives us a function $f_1'\in \QSET[\delta]$ such that 
$\Abs{\FunAlgErr{f'_1}{\PRBQ} -\MaxErr{\PRBQ}} \leq \frac{\delta}{2} \MaxErr{\PRBQ}$.
Since $\MaxErrQ{\PRBQ} \geq \FunAlgErr{f'_1}{\PRBQ}$,
we obtain that $\MaxErrQ{\PRBQ} \geq  \MaxErr{\PRBQ}(1 - \frac{\delta}{2})$.

Similarly, applying the second property from Definition \ref{def:approximating} to 
$f_2 \in \QSET[\delta]$, gives us a function $f_2' \in \LCSET$ with 
$\Abs{\FunAlgErr{f'_2}{\PRB} -\MaxErrQ{\PRB}} \leq \frac{\delta}{2} \MaxErr{\PRB}$.
Since $\MaxErr{\PRB} \geq \FunAlgErr{f'_2}{\PRB}$,
we have that $\MaxErr{\PRB} \geq  \MaxErrQ{\PRB} - \frac{\delta}{2}\MaxErr{\PRB}$,
or $\MaxErr{\PRB} \geq \frac{\MaxErrQ{\PRB}}{1 + \frac{\delta}{2}}$.
Also, by optimality of \PRBQ in \QSET[\delta],
$\MaxErrQ{\PRBQ} \leq  \MaxErrQ{\PRB}$.
Thus, we obtain that
$\MaxErr{\PRBQ} \leq
\frac{\MaxErrQ{\PRBQ}}{1 - \frac{\delta}{2}} \leq
\frac{\MaxErrQ{\PRB}}{1 - \frac{\delta}{2}} \leq
 \frac{\MaxErr{\PRB} (1 +  \frac{\delta}{2})}{1 - \frac{\delta}{2}}
\leq \MaxErr{\PRB} (1 + 2\delta).$
\iffalse
\[ \begin{array}{lclclclclcl}
\MaxErr{\PRBQ}
& \leq & \frac{\MaxErrQ{\PRBQ}}{1 - \frac{\delta}{2}}
& \leq & \frac{\MaxErrQ{\PRB}}{1 - \frac{\delta}{2}}
& \leq & \frac{\MaxErr{\PRB} (1 +  \frac{\delta}{2})}{1 - \frac{\delta}{2}}
& \leq & \MaxErr{\PRB} (1 + 2 \frac{\delta}{2})
& = & \MaxErr{\PRB} (1 + \delta). \QED
\end{array} \]
\fi
\end{proof}

In light of Theorem \ref{thm:approximate-LP}, it suffices to
exhibit classes of functions $\delta$-approximating \LCSET for which
the corresponding optimization problem can be solved efficiently. We do so for
metric spaces of bounded doubling dimension and metric spaces that are
contained on the line.

\subsection{A PTAS for Arbitrary Metric Spaces}
We first observe that since the error for any translation of a function \FE is the
same as for \FE, we can assume w.l.o.g.~that
$\Fe{\MEDP} = 0$ for all functions \FE considered in this section.
Thus, in this section, we implicitly restrict \LCSET 
to functions with $\Fe{\MEDP} = 0$.

We next describe a set \QSET[\delta] of functions which
$\delta$-approximate \LCSET. 
Roughly, we will discretize function values to different multiples of
\QUANT, and consider distance scales that are different multiples
of \QUANT. We later set 
$\QUANT = \frac{\delta}{48 \cdot 6^\DDIM + 6}$.
We then show in Lemma \ref{lem:timecomplexity} that \QSET[\delta] has size 
$n^{\log(2(1+\QUANT)/\QUANT)(2/\QUANT)^\DDIM} = n^{O(1)}$
for constant $\delta$; 
the discretized LP  can therefore be solved in time 
$O(\mathrm{poly}(n) \cdot n^{\log(2(1+\QUANT)/\QUANT)(2/\QUANT)^\DDIM})$
(using exhaustive search for the separation oracle), and we obtain
a PTAS for finding the optimum distribution for arbitrary metric
spaces.

%%Adas: To simplify, I have defined k rings dividing the space into k+1 
%%regions. Radius of first ring is 1/2 and that of the k_th ring is 2\MEDV
%% Instead of rings, I refer to regions (though I havent changed the macro
%% name for the region:\Ring{i})

We let $k=\log_2 \frac{1}{2\MEDV}$, and define a sequence of $k$
rings of exponentially decreasing diameter around \MEDP, that
divide the space into $k+1$ regions $\Ring{1},\ldots,\Ring{k+1}$.
Specifically, 
$\Ring{k+1} = \Set{x}{\Dist{x}{\MEDP} \leq 2\MEDV}$, and
$\Ring{i} = \Set{x}{2^{-i} < \Dist{x}{\MEDP} \leq 2^{-(i-1)}}$
for $i=1, \ldots, k$.
Notice that because $\MEDV \geq \frac{1}{2n}$, we have that $k \leq
\log n$ suffices to obtain a disjoint cover.
%%For each $i$, let $\RDist{i} = \Abs{2^{-(i-1)} - \MEDV}$; notice that for
%%each $i \leq k$, this implies that $\RDist{i} \geq 2^{-i}$, and
%%$\RDist{k+1} \geq \MEDV$.
%For each $i$, let $\RDist{i} = 2^{-i}$.

Since the metric space has doubling dimension \DDIM, each region \Ring{i}
can be covered with at most $(2/\QUANT)^\DDIM$ balls of diameter 
$2\QUANT \cdot \RDist{i}$.
Let \Gb{i}{j} denote the \Kth{j} ball from the cover of \Ring{i};
without loss of generality, each \Gb{i}{j} is non-empty
and contained in \Ring{i} (otherwise, consider its intersection with
\Ring{i} instead).
We call \Gb{i}{j} the \Kth{j} \todef{grid ball} for region $i$.
Thus, the grid balls cover all points, and there are at most
$(2/\QUANT)^\DDIM \cdot \log n$ grid balls.
See Figure \ref{fig:grid} for an illustration of this cover.

For each grid ball \Gb{i}{j}, let $\GbR{i}{j} \in \Gb{i}{j}$ be an
arbitrary, but fixed, \todef{representative} of \Gb{i}{j}.
The exception is that for the grid ball containing \MEDP, \MEDP must
be chosen as the representative.
We now define the class \QSET[\delta] of functions \FE as follows:
\begin{enumerate}
\item For each $i,j$, \Fe{\GbR{i}{j}} is a multiple of 
$\QUANT \cdot \RDist{i}$. 
\item For all $(i,j), (i',j')$, the function values satisfy the
following \todef{relaxed Lipschitz-condition}:
$\Abs{\Fe{\GbR{i}{j}} - \Fe{\GbR{i'}{j'}}} 
\leq \Dist{\GbR{i}{j}}{\GbR{i'}{j'}} + \QUANT \cdot (\RDist{i}+\RDist{i'})$.
\item All points in $\Gb{i}{j}$ have the same function value, i.e.,
$\Fe{x} = \Fe{\GbR{i}{j}}$ for all $x \in \Gb{i}{j}$.
\end{enumerate}

\begin{figure}[htb]
\begin{center}
\epsfxsize=10.5cm
\epsffile{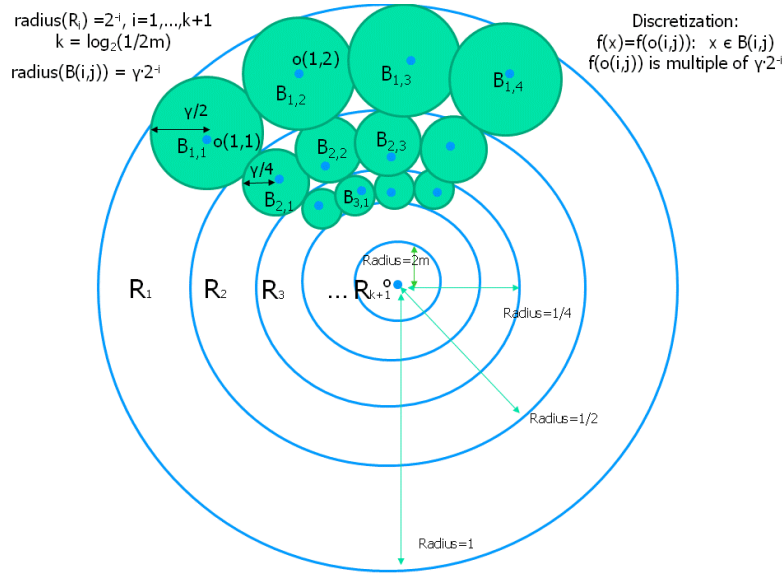}
\caption{Covering with grid balls \label{fig:grid}}
\end{center}
\end{figure}

We first show that the size of \QSET[\delta] is polynomial in $n$.

\begin{lemma} \label{lem:timecomplexity}
The size of \QSET[\delta] is at most
$n^{\log(2 (1+\QUANT)/\QUANT)(2/\QUANT)^\DDIM}$.
\end{lemma}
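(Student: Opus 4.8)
The claim is a counting bound on $|\QSET[\delta]|$, so the plan is simply to bound the number of ways to make the choices (1)–(3) that define a function $\FE \in \QSET[\delta]$. Choice (3) is free once (1) is fixed, so the count is entirely determined by choice (1): assigning to each representative $\GbR{i}{j}$ a value $\Fe{\GbR{i}{j}}$ that is a multiple of $\QUANT \cdot \RDist{i}$. (The relaxed Lipschitz condition (2) only restricts the set of valid assignments, so it can only decrease the count — we just need an upper bound, so we may ignore it.) First I would count the number of grid balls: each region $\Ring{i}$ is covered by at most $(2/\QUANT)^\DDIM$ grid balls, and there are at most $k+1 \leq \log n + 1$ regions (using $\MEDV \geq \frac{1}{2n}$), so there are at most $(2/\QUANT)^\DDIM \log n$ grid balls in total — I'd be slightly careful about whether it is $\log n$ or $\log n + 1$, but for an upper bound one can absorb this.

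**The main quantitative step.** For a fixed grid ball $\Gb{i}{j}$, I need to bound the number of admissible values for $\Fe{\GbR{i}{j}}$. Since we have normalized $\Fe{\MEDP} = 0$ and the metric space has diameter $1$, Lipschitz continuity gives $\Abs{\Fe{x}} \leq 1$ for all $x$; the relaxed Lipschitz condition loosens this to roughly $\Abs{\Fe{\GbR{i}{j}}} \leq 1 + \QUANT \cdot (\RDist{i} + 1) \leq 1 + 2\QUANT$ (comparing to the representative $\GbR{k+1,\cdot} = \MEDP$ with value $0$, whose scale contributes $\QUANT \cdot \RDist{k+1} \leq 2\QUANT\MEDV \leq \QUANT$). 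So $\Fe{\GbR{i}{j}}$ lies in an interval of length at most $2(1+\QUANT)$ (or so) and is a multiple of $\QUANT \cdot \RDist{i}$, giving at most $\frac{2(1+\QUANT)}{\QUANT \cdot \RDist{i}} = \frac{2(1+\QUANT)}{\QUANT} \cdot 2^i$ choices. The worst case is $i = k$, where $2^i = 2^k = \frac{1}{2\MEDV} \leq n$, so each grid ball contributes at most $\frac{2(1+\QUANT)}{\QUANT} \cdot n$ choices — actually I expect the intended bound keeps the cleaner form $\frac{2(1+\QUANT)}{\QUANT}$ per ball by being more careful (values for balls in $\Ring{i}$ only range over roughly the diameter $2^{-(i-1)}$ of $\Ring{i}$ around a reference point, not the whole $[-1,1]$), so that the per-ball count is $\frac{2^{-(i-1)} + O(\QUANT 2^{-i})}{\QUANT 2^{-i}} = O((1+\QUANT)/\QUANT)$ independent of $i$.

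**Assembling the bound.** Multiplying the per-ball count (call it $M = 2(1+\QUANT)/\QUANT$) over all at most $(2/\QUANT)^\DDIM \log n$ grid balls gives $|\QSET[\delta]| \leq M^{(2/\QUANT)^\DDIM \log n} = \bigl(2^{\log M}\bigr)^{(2/\QUANT)^\DDIM \log n} = 2^{\log M \cdot (2/\QUANT)^\DDIM \log n} = n^{\log(2(1+\QUANT)/\QUANT) \cdot (2/\QUANT)^\DDIM}$, which is exactly the stated bound. The last rewriting uses $2^{c \log n} = n^c$.

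**Expected obstacle.** The routine part is the exponent bookkeeping; the one genuinely delicate point is getting the per-grid-ball value count down to $O((1+\QUANT)/\QUANT)$ \emph{uniformly in $i$} rather than the naive $O(2^i (1+\QUANT)/\QUANT)$ — naively the finest scale $\QUANT \cdot \RDist{i}$ against a global range of $[-1,1]$ would give an extra factor of $2^i \leq n$ per ball, blowing the exponent up by a factor of the number of balls. The fix is that a representative in $\Ring{i}$ has $\Dist{\GbR{i}{j}}{\MEDP} \leq 2^{-(i-1)}$, so by (relaxed) Lipschitz continuity $\Fe{\GbR{i}{j}}$ lies in an interval of length $O(2^{-(i-1)} + \QUANT 2^{-i})$ around $0$; dividing by the granularity $\QUANT \cdot 2^{-i}$ cancels the $2^{-i}$ and leaves the clean $O((1+\QUANT)/\QUANT)$ bound. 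I would make sure this cancellation is stated carefully, since it is the crux of why the bound is polynomial rather than quasi-polynomial in $n$.
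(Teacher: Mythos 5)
Your proposal is correct and follows essentially the same route as the paper's proof: bound the number of grid balls by $(2/\QUANT)^\DDIM\log n$, then bound the admissible values per representative by $\frac{2(1+\QUANT)}{\QUANT}$ using the fact that $\Dist{\GbR{i}{j}}{\MEDP}\leq 2^{-(i-1)}$ so that the $2^{-i}$ in the range cancels against the $2^{-i}$ in the granularity $\QUANT\cdot\RDist{i}$. The ``expected obstacle'' you flag is exactly the step the paper's one-line inequality performs, so nothing is missing.
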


\begin{proof}
Because of the first and second constraint in the definition
of \QSET[\delta], each point \GbR{i}{j} can take on at most
$\frac{\Dist{\GbR{i}{j}}{\MEDP} + 2\QUANT \RDist{i}}{\QUANT \cdot 2^{-i}}
 \leq  \frac{2 (1+\QUANT) \cdot 2^{-i}}{\QUANT \cdot 2^{-i}}
 =  \frac{2(1+\QUANT)}{\QUANT}$
distinct values. 
Setting the values for all \GbR{i}{j} uniquely determines the function
\FE; the relaxed Lipschitz condition will result in some of these
functions not being in \QSET[\delta], and thus only decreases the
number of possible functions.
Because there are at most $(2/\QUANT)^\DDIM \cdot \log n$ grid balls,
there are at most 
$(2(1+\QUANT)/\QUANT)^{(2/\QUANT)^\DDIM \cdot \log n} 
= n^{\log(2(1+\QUANT)/\QUANT)(2/\QUANT)^\DDIM}$
functions in \QSET[\delta].
\end{proof}

We need to prove that \QSET[\delta] approximates \LCSET well, by
verifying that for each function $\FE \in \LCSET$, there is a
``close'' function in \QSET[\delta], and vice versa.
We first show that for any function satisfying the relaxed Lipschitz
condition, we can change the function values slightly and obtain a
Lipschitz continuous function.

\begin{lemma} \label{lem:lipschitz-adjust}
For each $x \in \METSPACE$, let $s_x$ be some non-negative number.
Assume that \FE satisfies the ``relaxed Lipschitz condition''
$\Abs{\Fe{x} - \Fe{y}} \leq \Dist{x}{y} + s_x + s_y$ for all $x,y$.
Then, there is a Lipschitz continuous function $\FEP \in \LCSET$ such that 
$\Abs{\Fe{x} - \FeP{x}} \leq s_x$ for all $x$.
\end{lemma}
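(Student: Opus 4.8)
The plan is to construct $\FEP$ explicitly via a ``shortest-path'' style formula that simultaneously enforces the genuine Lipschitz condition and keeps every value within $s_x$ of $\Fe{x}$. Concretely, I would define
\[
\FeP{x} \;=\; \min_{y \in \METSPACE} \bigl( \Fe{y} + s_y + \Dist{x}{y} \bigr) \;-\; s_x,
\]
or, if that turns out to over-correct on one side, the symmetrized variant obtained by also taking a max against $\Fe{y} - s_y - \Dist{x}{y}$ and then clamping. The intuition is that $\Fe{y}+s_y$ is an ``upper envelope'' value that is safely above the true target, and the min-convolution with the metric $\Dist{x}{\cdot}$ produces the largest $1$-Lipschitz function dominated by that envelope; subtracting $s_x$ recenters it near $\Fe{x}$.

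The key steps, in order, would be: (1) verify $\FeP{\cdot}$ is well-defined (the min exists since $\METSPACE$ is compact and all terms are continuous in $y$). (2) Show $\FEP \in \LCSET$: for the pre-subtraction function $h(x) = \min_y(\Fe{y}+s_y+\Dist{x}{y})$, the triangle inequality for \DIST gives $\Abs{h(x)-h(x')} \le \Dist{x}{x'}$ in the standard way; then $\FeP{x} = h(x) - s_x$ need not itself be $1$-Lipschitz because of the $-s_x$ term, so here I anticipate needing the symmetrized/clamped version and re-deriving Lipschitz-continuity for that. (3) Show $\Fe{x} - s_x \le \FeP{x} \le \Fe{x} + s_x$ for all $x$: the lower bound on $h(x)$ comes from taking $y = x$ in the min, giving $h(x) \le \Fe{x}+s_x$, hence $\FeP{x} \le \Fe{x}$, actually we want the two-sided bound so the relaxed Lipschitz hypothesis $\Abs{\Fe{x}-\Fe{y}} \le \Dist{x}{y}+s_x+s_y$ enters precisely to show $h(x) \ge \Fe{x} - s_x + \text{(something)}$, i.e.\ $\Fe{y}+s_y+\Dist{x}{y} \ge \Fe{x} - s_x$ for every $y$, which is exactly the hypothesis rearranged. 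So one direction of the bound is immediate from taking $y=x$, and the other direction is exactly the relaxed Lipschitz condition.

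The main obstacle I expect is step (2) for the recentered function: the naive min-convolution is $1$-Lipschitz, but subtracting the $x$-dependent quantity $s_x$ can destroy that, and $s_x$ is only assumed non-negative with no continuity or Lipschitz control of its own. The fix is to not subtract $s_x$ pointwise but rather to take $\FeP{x}$ to be the \emph{median} (clamp) of the three values $\Fe{x}-s_x$, $\Fe{x}+s_x$, and $h(x)$ where $h(x)$ is chosen as a natural $1$-Lipschitz interpolant between the lower envelope $x \mapsto \Fe{x}-s_x$ and the upper envelope $x \mapsto \Fe{x}+s_x$; clamping a $1$-Lipschitz function between two $1$-Lipschitz functions that never cross (which the relaxed-Lipschitz hypothesis guarantees: the gap $\Fe{y}-s_y - (\Fe{x}+s_x)$ is $\le \Dist{x}{y}$) yields again a $1$-Lipschitz function lying in the band, and the band is exactly $[\Fe{x}-s_x, \Fe{x}+s_x]$. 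Making this clamping argument precise — in particular checking that the lower and upper envelopes are themselves $1$-Lipschitz using the hypothesis with $s_x+s_y$ on the right — is where the real work lies; everything else is routine.
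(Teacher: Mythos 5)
Your infimal-convolution idea is sound and genuinely different from the paper's proof, which instead builds $f'$ greedily one point at a time (always assigning the smallest value consistent with the Lipschitz constraints from already-set points and with the band $[f(x)-s_x,\,f(x)+s_x]$, and verifying via a chain-tracing argument that these constraints never conflict). But your write-up buries the correct proof under an unnecessary complication and then proposes a broken fix for the non-problem it creates. The function $h(x)=\min_{y}\bigl(f(y)+s_y+d(x,y)\bigr)$ already \emph{is} the desired $f'$: it is $1$-Lipschitz by the standard triangle-inequality argument; $h(x)\le f(x)+s_x$ by taking $y=x$; and $h(x)\ge f(x)-s_x$ because $f(y)+s_y+d(x,y)\ge f(x)-s_x$ is exactly the relaxed Lipschitz hypothesis rearranged. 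All three facts appear verbatim in your own steps (2) and (3), and together they give $\Abs{h(x)-f(x)}\le s_x$, which is all the lemma asks. The $-s_x$ recentering serves no purpose --- the goal is only to land within $s_x$ of $f(x)$, not to approximate $f(x)$ more tightly than that --- and it is precisely what destroys Lipschitz continuity and sends you looking for a repair.

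The repair you then propose contains a genuine error: the envelopes $x\mapsto f(x)-s_x$ and $x\mapsto f(x)+s_x$ are \emph{not} $1$-Lipschitz in general, and the relaxed Lipschitz hypothesis does not make them so. For two points at distance $d$ with $s_x=0$, $s_y=M$, $f(x)=0$, $f(y)=d+M$, the hypothesis holds (with equality), yet the upper-envelope values $0$ and $d+2M$ differ by more than $d$. So the ``clamp between two $1$-Lipschitz envelopes'' step is unavailable, and as literally stated your final construction does not go through. Fortunately none of it is needed: drop the $-s_x$ and the clamping, keep $h$, and you have a complete proof. Compared to the paper's iterative construction (which requires the chain argument to show the greedy choices never violate the upper bound $f(x)+s_x$), the closed form $h$ is shorter, works verbatim for infinite compact metric spaces, and produces the pointwise-maximal admissible $f'$ rather than the pointwise-minimal one the paper constructs.
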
 

\begin{proof}
%We start with $\FeP{x} = \Fe{x}$ for all $x$. 
We describe an algorithm which runs in iterations $\ell$, and sets
the value of one point $x$ per iteration.
$S_\ell$ denotes the set of $x$ such that \FeP{x} has been set.
We maintain the following two invariants after the \Kth{\ell} iteration:
\begin{enumerate}
\item \FEP satisfies the Lipschitz condition for all pairs of points in $S_{\ell}$,
and $\Abs{\FeP{x} - \Fe{x}} \leq s_x$ for all $x \in S_{\ell}$.
\item For every function \FEPP satisfying the previous condition,
$\FeP{x} \leq \FePP{x}$ for all $x \in S_{\ell}$.
\end{enumerate}
Initially, this clearly holds for $S_0 = \emptyset$.
And clearly, if it holds after iteration $n$, the function \FEP
satisfies the claim of the lemma.

We now describe iteration $\ell$. 
For each $x \notin S_{\ell-1}$, let 
$t_x = \max_{y \in S_{\ell-1}} (\FeP{y} - \Dist{x}{y})$.
We show below that for all $x$, we have $t_x \leq \Fe{x} + s_x$.
Let $x \notin S_{\ell-1}$ be a point
maximizing $\max(\Fe{x}-s_x, t_x)$, and set
$\FeP{x} = \max(\Fe{x}-s_x, t_x)$. It is easy to verify
that this definition satisfies both parts of the invariant.

It remains to show that $t_x \leq \Fe{x} + s_x$ for all points 
$x \notin S_{\ell-1}$. Assume that $t_x > \Fe{x} + s_x$ for some point
$x$. Let $x_1$ be the point in $S_{\ell-1}$ for which 
$t_x = \FeP{x_1} - \Dist{x}{x_1}$.
By definition, either $\FeP{x_1} = \Fe{x_1}-s_{x_1}$, or there is an $x_2$
such that $\FeP{x_1} = t_{x_1} = \FeP{x_2} - \Dist{x_1}{x_2}$.
In this way, we obtain a chain $x_1, \ldots, x_r$ such that
$\FeP{x_i} = \FeP{x_{i+1}} - \Dist{x_i}{x_{i+1}}$ for all $i < r$,
and $\FeP{x_r} = \Fe{x_r} - s_{x_r}$.
Rearranging as $\FeP{x_{i+1}} - \FeP{x_i} = \Dist{x_i}{x_{i+1}}$,
and adding all these equalities for $i = 1, \ldots, r$ gives us that
$\Fe{x_r} - \FeP{x_1} = s_{x_r} + \sum_{i=1}^{r-1} \Dist{x_i}{x_{i+1}}$.
By assumption, we have
$\FeP{x_1} - \Dist{x}{x_1} = t_x > \Fe{x} + s_x$.
Substituting the previous equality, rearranging, and applying the
triangle inequality gives us that
\[ \begin{array}{lclcl}
\Fe{x_r} - \Fe{x}
& > & s_x + s_{x_r} + \Dist{x}{x_1} + \sum_{i=1}^{r-1} \Dist{x_i}{x_{i+1}}
& \geq & s_x + s_{x_r} + \Dist{x}{x_r},
\end{array} \]
which contradicts the relaxed Lipschitz condition for the pair
$x, x_r$. 
\end{proof}

We now use Lemma \ref{lem:lipschitz-adjust} to obtain,
for any given $\FE \in \QSET[\delta]$, a function $\FEP \in \LCSET$
close to \FE. 

\begin{lemma} \label{lem:LPdiff2}
Let $\FE \in \QSET[\delta]$. There exists an $\FEP \in \LCSET$ such
that $\Abs{\FunAlgErr{\FE}{\PRB} - 
\FunAlgErr{\FEP}{\PRB}}
\leq \frac{\delta}{2} \cdot \MaxErr{\PRB}$, for all distributions \PRB.
\end{lemma}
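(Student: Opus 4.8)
The plan is to realize $\FE$ as a small perturbation of a genuinely Lipschitz function via Lemma \ref{lem:lipschitz-adjust}, then show that the perturbation changes $\FunAlgErr{\cdot}{\PRB}$ by only an additive $O(\QUANT\cdot\MEDV)$ together with $O(\QUANT)\cdot\MaxErr{\PRB}$, and finally convert the additive term into a multiplicative one using the lower bound $\MaxErr{\PRB}\ge\frac{1}{4\cdot 6^\DDIM}\MEDV$ from Lemma \ref{lem:lowerboundkd}; the stated choice of $\QUANT$ then makes everything at most $\frac{\delta}{2}\MaxErr{\PRB}$.

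First I would define, for each $x$ lying in some grid ball $\Gb{i}{j}\subseteq\Ring{i}$, a slack $s_x$ equal to a fixed constant multiple of $\QUANT\cdot\RDist{i}$ — this is exactly the scale at which both the rounding of values to multiples of $\QUANT\cdot\RDist{i}$ (the first condition defining $\QSET[\delta]$) and the flattening of $\FE$ on $\Gb{i}{j}$ (the third condition) operate. Using the relaxed Lipschitz condition on the representatives $\GbR{i}{j}$ (the second condition), the triangle inequality, and the fact that each grid ball $\Gb{i}{j}$ has diameter at most $2\QUANT\cdot\RDist{i}$, one checks that $\Abs{\Fe{x}-\Fe{y}}\le\Dist{x}{y}+s_x+s_y$ for all $x,y$, so $\FE$ satisfies the hypothesis of Lemma \ref{lem:lipschitz-adjust}. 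That lemma then yields $\FEP\in\LCSET$ with $\Abs{\Fe{x}-\FeP{x}}\le s_x$ everywhere; translating $\FEP$ so that $\FeP{\MEDP}=0$ (if needed to land in the restricted class) leaves $\FunAlgErr{\FEP}{\PRB}$ unchanged.

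To compare the two errors I would write them out and apply the (reverse) triangle inequality, obtaining $\Abs{\FunAlgErr{\FE}{\PRB}-\FunAlgErr{\FEP}{\PRB}}\le\Abs{\Avg{\FE}-\Avg{\FEP}}+\sum_x\Prb{x}\Abs{\Fe{x}-\FeP{x}}\le\frac{1}{n}\sum_x s_x+\sum_x\Prb{x}s_x$. The unweighted sum is $O(\QUANT\MEDV)$: for $x\in\Ring{i}$ with $i\le k$ we have $\Dist{x}{\MEDP}>\RDist{i}$, so $s_x=O(\QUANT\cdot\Dist{x}{\MEDP})$, and $\frac1n\sum_x\Dist{x}{\MEDP}=\MEDV$, while $\Ring{k+1}$ contributes at most $O(\QUANT\MEDV)$. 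For the weighted sum I would test $\PRB$ against the Lipschitz function $\Fe[0]{x}:=\Dist{x}{\MEDP}\in\LCSET$, which has average $\MEDV$: for $x\in\Ring{i}$ with $i\le k$ we have $\Dist{x}{\MEDP}>\RDist{i}\ge 2\MEDV$, hence $\Abs{\Dist{x}{\MEDP}-\MEDV}\ge\frac12\Dist{x}{\MEDP}\ge\frac12\RDist{i}$, so $\MaxErr{\PRB}\ge\FunAlgErr{f_0}{\PRB}\ge\frac12\sum_{i\le k}\sum_{x\in\Ring{i}}\Prb{x}\RDist{i}$, which gives $\sum_x\Prb{x}s_x=O(\QUANT(\MaxErr{\PRB}+\MEDV))$ (the $\Ring{k+1}$ part again being $O(\QUANT\MEDV)$).

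Combining these bounds, $\Abs{\FunAlgErr{\FE}{\PRB}-\FunAlgErr{\FEP}{\PRB}}=O(\QUANT\MEDV)+O(\QUANT)\cdot\MaxErr{\PRB}$; plugging in $\MEDV\le 4\cdot 6^\DDIM\cdot\MaxErr{\PRB}$ from Lemma \ref{lem:lowerboundkd} makes this $O(\QUANT\cdot 6^\DDIM)\cdot\MaxErr{\PRB}$, and $\QUANT=\frac{\delta}{48\cdot 6^\DDIM+6}$ drives the hidden constant below $\frac{\delta}{2}$. I expect the weighted sum $\sum_x\Prb{x}s_x$ to be the main obstacle: the crude bound $\sum_x\Prb{x}s_x\le\max_x s_x=O(\QUANT)$ is useless, and the real content is that $\PRB$ can place probability mass where $s_x$ is large (the outer rings, with big $\RDist{i}$) only by simultaneously incurring a large error against $f_0$ — so it is precisely the coupling of the slack $s_x$ to the ring scale $\RDist{i}$ that makes the test-function estimate go through.
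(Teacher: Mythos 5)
Your proposal is correct and follows essentially the same route as the paper: perturb \FE into a Lipschitz function via Lemma \ref{lem:lipschitz-adjust} with slacks $s_x = \Theta(\QUANT \cdot \RDist{\PRing{x}})$, bound the change in the average by $O(\QUANT \cdot \MEDV)$, control the weighted term by testing \PRB against the function $x \mapsto \Dist{x}{\MEDP}$, and convert the additive error into a relative one via Lemma \ref{lem:lowerboundkd}. The only cosmetic difference is that you apply Lemma \ref{lem:lipschitz-adjust} to all points at once (after verifying the relaxed condition globally, which indeed holds with $s_x = 3\QUANT\cdot\RDist{\PRing{x}}$), whereas the paper applies it only to the representatives \GbR{i}{j} and then extends by min/max interpolation --- both yield the same $3\QUANT\cdot\RDist{i}$ pointwise bound.
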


\begin{proof}
Because $\FE \in \QSET[\delta]$, it must satisfy,
for all $(i,j)$ and $(i',j')$, the relaxed Lipschitz condition
$\Abs{\Fe{\GbR{i}{j}} - \Fe{\GbR{i'}{j'}}} 
\leq \Dist{\GbR{i}{j}}{\GbR{i'}{j'}} + \QUANT \cdot (\RDist{i}+\RDist{i'})$.
Now, we apply Lemma \ref{lem:lipschitz-adjust} with
$s_{\GbR{i}{j}} = \QUANT \cdot \RDist{i}$ to get function values
\FeP{\GbR{i}{j}} for all $i,j$ that satisfy the Lipschitz condition
and the condition that 
$\Abs{\FeP{\GbR{i}{j}} - \Fe{\GbR{i}{j}}} \leq \QUANT \cdot \RDist{i}$.
For any other point $x$, let
$\MaxVal{x}{\FE} = \min_{i,j}(\FeP{\GbR{i}{j}} + \Dist{x}{\GbR{i}{j}})$ and 
$\MinVal{x}{\FE} = \max_{i,j}(\FeP{\GbR{i}{j}} - \Dist{x}{\GbR{i}{j}})$, and set
$\FeP{x} = \half \cdot (\MaxVal{x}{\FE} + \MinVal{x}{\FE})$.
It is easy to see that $\MinVal{x}{\FE} \leq \MaxVal{x}{\FE}$ for all
$x$, and that this definition gives a Lipschitz continuous function \FEP.
For a point $x \in \Gb{i}{j}$, triangle inequality, the above
construction, and the fact that \Gb{i}{j} has diameter 
at most $2\QUANT \cdot \RDist{i}$ imply that
\begin{eqnarray*}
\Abs{\FeP{x} - \Fe{x}} 
& \leq & \Abs{\FeP{x} - \FeP{\GbR{i}{j}}} 
+ \Abs{\FeP{\GbR{i}{j}} - \Fe{\GbR{i}{j}}}
+ \Abs{\Fe{\GbR{i}{j}} - \Fe{x}}\\
& \leq & 2\QUANT \cdot \RDist{i}
+ \QUANT \cdot \RDist{i}
+ 0  \\
& = & 3\QUANT \cdot \RDist{i}.
\end{eqnarray*}

For each point $x$, let \PRing{x} be the index of the region $i$ such
that $x \in \Ring{i}$. Now, using the triangle inequality and
Lemma \ref{lem:lowerboundkd}, we can bound
\begin{eqnarray*}
\Abs{\Avg{\FEP} - \Avg{\FE}}
& \leq & \frac{1}{n} \cdot \sum_x \Abs{\FeP{x} - \Fe{x}} \\
& \leq & \frac{1}{n} \cdot \sum_x 3\QUANT \cdot \RDist{\PRing{x}} \\
& \leq & \frac{1}{n} \cdot (\sum_{x \notin R_{k+1}} 3\QUANT \cdot \Dist{x}{\MEDP}
+ \sum_{x \in R_{k+1}} 3\QUANT \cdot \MEDV) \\
& \leq & \frac{1}{n} \cdot (3 \QUANT n \MEDV + 3 \QUANT n \MEDV)\\
& \leq & 24 \cdot 6^\DDIM \cdot \QUANT \cdot \MaxErr{\PRB}.
\end{eqnarray*}

Similarly, we can bound
\begin{eqnarray*}
\sum_x \Prb{x} \cdot \Abs{\FeP{x} - \Fe{x}}
& \leq &
3\QUANT \cdot (\sum_{x \notin R_{k+1}} \Prb{x} \cdot \Dist{x}{\MEDP}
+ \sum_{x \in R_{k+1}} \Prb{x} \MEDV)\\
& \leq &
3\QUANT \cdot (\MEDV + \sum_{x \notin R_{k+1}} \Prb{x} \cdot \Dist{x}{\MEDP}).
\end{eqnarray*}

Let \FEPP be defined as $\FePP{x} = \Dist{x}{\MEDP}$. Clearly, 
$\FEPP \in \LCSET$, $\Avg{\FEPP} = \MEDV$,
and the estimation error for \PRB when the input is \FEPP is
%\begin{eqnarray*}
\[ \begin{array}{lclclcl}
\FunAlgErr{\FEPP}{\PRB}
& = & \sum_x \Prb{x} \cdot \Abs{\FePP{x} - \MEDV}%\\
& \geq &  
\sum_{x \notin \Ring{k+1}} \Prb{x} \cdot \Abs{\Dist{x}{\MEDP} - \MEDV}%\\
& \geq &
(\sum_{x \notin \Ring{k+1}} \Prb{x} \cdot \Dist{x}{\MEDP}) - \MEDV.
\end{array} \]
%\end{eqnarray*}

Combining these observations, and using Lemma \ref{lem:lowerboundkd} and the fact that
$\FunAlgErr{\FEPP}{\PRB} \leq \MaxErr{\PRB}$, we get
$\sum_x \Prb{x} \cdot \Abs{\FeP{x} - \Fe{x}}
 \leq  6\QUANT \cdot \MEDV
+ 3\QUANT \FunAlgErr{\FEPP}{\PRB}
 \leq 
(8\cdot 6^\DDIM + 1) \cdot 3\QUANT \cdot \MaxErr{\PRB}$.

Now, by using the fact that $\Abs{\FunAlgErr{\FE}{\PRB} - 
\FunAlgErr{\FEP}{\PRB}} \leq \Abs{\Avg{\FEP} - \Avg{\FE}} + 
\sum_x \Prb{x} \cdot \Abs{\FeP{x} - \Fe{x}}$, and 
setting $\QUANT = \frac{\delta}{48 \cdot 6^\DDIM + 6}$, we 
obtain the desired bound.
\end{proof}

Finally, we need to analyze the converse direction.

\begin{lemma} \label{lem:LPdiff1}
Let $\FE \in \LCSET$. There exists an $\FEP \in \QSET[\delta]$ such that
$\Abs{\FunAlgErr{\FE}{\PRB} - 
\FunAlgErr{\FEP}{\PRB}}
\leq \frac{\delta}{2} \cdot \MaxErr{\PRB}$, for all distributions \PRB.
\end{lemma}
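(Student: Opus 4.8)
The plan is to obtain $\FEP$ from $\FE$ by a simple rounding step and then to reuse, almost verbatim, the quantitative bookkeeping from the proof of Lemma~\ref{lem:LPdiff2}. This is the easier of the two directions: unlike there, we do not need anything like Lemma~\ref{lem:lipschitz-adjust}, since passing \emph{from} a genuine Lipschitz function \emph{to} a member of $\QSET[\delta]$ only requires loosening the Lipschitz condition, not tightening it.

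\textbf{Construction.} For each grid ball $\Gb{i}{j}$, let $\FeP{\GbR{i}{j}}$ be a multiple of $\QUANT\cdot\RDist{i}$ nearest to $\Fe{\GbR{i}{j}}$, so that $\Abs{\FeP{\GbR{i}{j}}-\Fe{\GbR{i}{j}}}\le\half\QUANT\cdot\RDist{i}$; for the grid ball containing \MEDP this rounds $\Fe{\MEDP}=0$ to $0$, so the normalization $\FeP{\MEDP}=0$ is preserved. Extend $\FEP$ to all of \METSPACE by setting $\FeP{x}=\FeP{\GbR{i}{j}}$ for every $x\in\Gb{i}{j}$ (refining the cover to a partition, or pinning one grid ball to each point, so that this is well-defined). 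Conditions~1 and~3 in the definition of $\QSET[\delta]$ hold by construction, and condition~2 holds because \FE is genuinely $1$-Lipschitz: $\Abs{\FeP{\GbR{i}{j}}-\FeP{\GbR{i'}{j'}}}\le\Abs{\Fe{\GbR{i}{j}}-\Fe{\GbR{i'}{j'}}}+\half\QUANT\RDist{i}+\half\QUANT\RDist{i'}\le\Dist{\GbR{i}{j}}{\GbR{i'}{j'}}+\QUANT(\RDist{i}+\RDist{i'})$, which is exactly the relaxed Lipschitz condition. Hence $\FEP\in\QSET[\delta]$.

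\textbf{Error estimate.} For $x\in\Gb{i}{j}$, using the triangle inequality, $\mathrm{diam}(\Gb{i}{j})\le2\QUANT\RDist{i}$, and Lipschitz-continuity of \FE, we get $\Abs{\Fe{x}-\FeP{x}}\le\Abs{\Fe{x}-\Fe{\GbR{i}{j}}}+\Abs{\Fe{\GbR{i}{j}}-\FeP{\GbR{i}{j}}}\le2\QUANT\RDist{i}+\half\QUANT\RDist{i}\le3\QUANT\RDist{i}$, the same per-point bound as in Lemma~\ref{lem:LPdiff2}. From $\Abs{\FunAlgErr{\FE}{\PRB}-\FunAlgErr{\FEP}{\PRB}}\le\Abs{\Avg{\FE}-\Avg{\FEP}}+\sum_x\Prb{x}\Abs{\Fe{x}-\FeP{x}}$ one then repeats the argument of that lemma step by step: bound $\Abs{\Avg{\FE}-\Avg{\FEP}}\le\frac1n\sum_x3\QUANT\RDist{\PRing{x}}$ by splitting off the innermost region $\Ring{k+1}$ and using $\sum_x\Dist{x}{\MEDP}=n\MEDV$ together with Lemma~\ref{lem:lowerboundkd} (in the form $\MEDV\le4\cdot6^\DDIM\MaxErr{\PRB}$); and bound $\sum_x\Prb{x}\Abs{\Fe{x}-\FeP{x}}$ by the same split, introducing the auxiliary function $\FePP{x}=\Dist{x}{\MEDP}$ (which is in \LCSET, has average \MEDV, and satisfies $\FunAlgErr{\FEPP}{\PRB}\le\MaxErr{\PRB}$) to control $\sum_{x\notin\Ring{k+1}}\Prb{x}\Dist{x}{\MEDP}$. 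Summing the two contributions gives a bound of the form $O(6^\DDIM)\cdot\QUANT\cdot\MaxErr{\PRB}$; with the value $\QUANT=\frac{\delta}{48\cdot6^\DDIM+6}$ already fixed in the text, this is at most $\frac{\delta}{2}\cdot\MaxErr{\PRB}$, as required.

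\textbf{Anticipated obstacle.} There is no real obstacle here: the construction is essentially forced, the relaxed Lipschitz condition survives the rounding for free (the rounding error is only half a quantization step on each side, and \FE was truly Lipschitz), and the remaining estimate is the identical computation already carried out for Lemma~\ref{lem:LPdiff2}. The only points meriting a sentence of care are the possible overlap of grid balls within a region (resolved by passing to a partition) and the need to keep the normalization $\FeP{\MEDP}=0$ intact under rounding, which it is.
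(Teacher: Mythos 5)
Your proof is correct and follows essentially the same route as the paper's: round the representative values to multiples of $\QUANT\cdot\RDist{i}$, propagate them constantly over each grid ball, check membership in $\QSET[\delta]$, and then reuse the per-point bound $\Abs{\Fe{x}-\FeP{x}}\le 3\QUANT\RDist{\PRing{x}}$ together with the bookkeeping from Lemma~\ref{lem:LPdiff2}. The only (harmless) difference is that you round to the nearest multiple while the paper rounds toward zero; your explicit verification of the relaxed Lipschitz condition is a detail the paper leaves as ``clearly.''
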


\begin{proof}
The proof is similar to that of Lemma \ref{lem:LPdiff2}.
First, for each grid ball representative \GbR{i}{j}, we
let \FeP{\GbR{i}{j}} be \Fe{\GbR{i}{j}}, rounded down (up for negative
numbers) to the nearest multiple of $\QUANT \cdot \RDist{i}$.
Then, for all points $x \in \Gb{i}{j}$, we set
$\FeP{x} = \FeP{\GbR{i}{j}}$. Clearly, the resulting function
\FEP is in \QSET[\delta].

By a similar argument as before, 
$\Abs{\FeP{x} - \Fe{x}} \leq 3\QUANT \cdot \RDist{\PRing{x}}$, 
for all points $x$.
Thus, we immediately get
$\Abs{\Avg{\FEP} - \Avg{\FE}} \leq
24 \cdot 6^\DDIM \cdot \QUANT \cdot \MaxErr{\PRB}$ as well.

Define the function \FEPP exactly as in the proof of Lemma \ref{lem:LPdiff2}.
Then, exactly the same bounds as in that proof apply, and give us the
claim.
\end{proof}
% by setting, for each $x \in \Gb{i}{j}$, 
% $\FePP{x}$ to be \Dist{\GbR{i}{j}}{\MEDP}, rounded down to the nearest
% multiple of $\QUANT \cdot \RDist{i}$. Clearly, 
% $\FEPP \in \QSET[\delta]$, 
% so $\MaxErrQ{\PRB} \geq \FunAlgErr{\FEPP}{\PRB}$.
% Furthermore, for every $x \in \Gb{i}{j}$, we have that
% $\FePP{x} \geq \Dist{\GbR{i}{j}}{\MEDP} - \QUANT \cdot \RDist{i}
% \geq (\half - 2\QUANT) \Dist{x}{\MEDP}$.
% Because $\Avg{\FEPP} \leq \MEDV$ by construction,
% the estimation error of \PRB again \FEPP is
% \[ \begin{array}{lcl}
% \FunAlgErr{\FEPP}{\PRB}
% & = & \sum_x \Prb{x} \cdot \Abs{\FePP{x} - \Avg{\FEPP}}\\
% & \geq &
% \sum_{x \notin \Ring{k+1}} (\Prb{x} \cdot (\half-2\QUANT) \Dist{x}{\MEDP} - \MEDV)\\
% & \geq &
% (\half-2\QUANT) \cdot (\sum_{x \notin \Ring{k+1}} \Prb{x} \cdot  \Dist{x}{\MEDP}) - \MEDV.
% \end{array} \]
% Combining this with the bound that
% $\sum_x \Prb{x} \cdot \Abs{\FeP{x} - \Fe{x}}
% \leq  
% 2(2\SC + \QUANT) \cdot (\MEDV + \sum_{x \notin R_{k+1}} \Prb{x} \cdot \Dist{x}{\MEDP}),
% $
% we get 
% \[ \begin{array}{lcl}
% \sum_x \Prb{x} \cdot \Abs{\FeP{x} - \Fe{x}}
% & \leq & 2(2\SC + \QUANT)(1+\frac{2}{1-4\QUANT}) \cdot \MEDV
% + 2(2\SC + \QUANT)\frac{2}{1-4\QUANT} \cdot \FunAlgErr{\FEPP}{\PRB}\\
% & \leq & 
% (24 \cdot 3^\DDIM + 2) \cdot (2\SC + \QUANT) \cdot \MaxErr{\PRB}.
% \end{array} \]
%
% Now, by using that $\SC = \QUANT < \frac{\delta}{144 \cdot 3^\DDIM + 12}$, we 
% obtain the desired bound.

\subsection{An FPTAS for points on a line} \label{sec:FPTAS}
In this section, we show that if the metric consists of a discrete point set 
on the line, then the general PTAS of the previous section can be improved to 
an FPTAS. 

Since we 
assumed the maximum distance to be $1$, we can assume w.l.o.g.~that the points
are $0= x_1 \leq x_2 \leq \cdots \leq x_n = 1$.
Also, because w.l.o.g. the post-processing is the identity function, 
we only need to consider functions 
$\FE \in \LCSET[0]$, i.e., such that $\sum_i \Fe{x_i} = 0$.
We define $\QUANT = \frac{\delta}{144n}$, and the class \QSET[\delta]
to contain the following functions \FE:
\begin{enumerate}
\item For each $i$, \Fe{x_i} is a multiple of \QUANT.
\item The function values satisfy the relaxed Lipschitz-condition
$\Abs{\Fe{x_i} - \Fe{x_j}} \leq \Dist{x_i}{x_j} + \QUANT$ for all $i,j$.
\item The sum is ``close to 0'', in the sense that 
$\sum_i \Fe{x_i} \leq n \QUANT$.
\end{enumerate}

We first establish that, given a probability distribution \PRB, a
function $\FE \in \QSET[\delta]$ maximizing 
$\sum_i \Prb{x_i} \cdot \Abs{\Fe{x_i}}$
can be found in polynomial time using Dynamic Programming. 
To set up the recurrence, let 
\DPErr{j}{t}{s} be the maximum expected error that can be achieved with
function values at $x_1, \ldots, x_j$, under the constraints that
$\Fe{x_j} = t$ and $\sum_{i \leq j} \Fe{x_i} = s$.
Then, we obtain the recurrence

\[ \begin{array}{lcl}
\DPErr{1}{t}{s}
& = & \CaseDist{\Prb{x_1} \cdot t}{\mbox{ if }s=t}{-\infty}\\
\DPErr{j+1}{t}{s}
& = & \max_{y \in [t-(x_{j+1}-x_j), t+(x_{j+1}-x_j)], \QUANT | y}
(\Prb{x_{j+1}} \Abs{t} + \DPErr{j}{y}{s-t})
\end{array} \]

The maximizing value is then 
$\max_{s \in [-n\QUANT, n\QUANT], \QUANT | s, t \in [-1, 1], \QUANT | t}
\DPErr{n}{t}{s}$.
The total number of entries is $O(n \cdot \frac{1}{\QUANT^2})$, and each
entry requires time $O(\frac{1}{\QUANT})$ to compute.
The overall running time is thus 
$O(n \cdot \frac{1}{\gamma^3}) = O(\frac{n^4}{\delta^3})$, 
giving us an FPTAS.

All we need now is to show that $\QSET[\delta]$ 
$\delta$-approximates \LCSET. We use the following lemma:

\begin{lemma} \label{lem:LPdiff1d}
For each $\FE \in \LCSET$, there is a function
$\FEP \in \QSET[\delta]$ such that
$\Abs{\FunAlgErr{\FE}{\PRB} -
\FunAlgErr{\FEP}{\PRB}}
\leq 2\QUANT$ for all distributions \PRB.
%\frac{\delta}{2} \cdot \MaxErrQ{\PRB}$, 
Also, for each $\FE \in \QSET[\delta]$, there is a function
$\FEP \in \LCSET$ such that
$\Abs{\FunAlgErr{\FE}{\PRB} -
\FunAlgErr{\FEP}{\PRB}}
\leq 6\QUANT$ for all distributions \PRB.
%\frac{\delta}{2} \cdot \MaxErr{\PRB}$, 
\end{lemma}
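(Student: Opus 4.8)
The plan is to prove both directions by the same kind of rounding-and-interpolation argument already used for Lemmas~\ref{lem:LPdiff2} and~\ref{lem:LPdiff1}, but simplified to the one-dimensional setting where the distance scales do not matter and the quantization step \QUANT is uniform. Throughout, I will use the elementary bound $\MaxErr{\PRB} \geq c$ for an appropriate constant $c$; concretely, on the line the point $0 = x_1$ and $1 = x_n$ are at distance $1$, so the $1$-median value satisfies $\MEDV \geq \frac{1}{2n}$, and one checks directly (as in the proof of Lemma~\ref{lem:LPdiff2} via the witness function $\FePP{x_i} = \Dist{x_i}{\MEDP}$) that $\MaxErr{\PRB} \geq \MEDV/\text{(small const)}$; since $\QUANT = \frac{\delta}{144 n}$, each additive error of the form $O(n \QUANT)$ we incur will be $O(\frac{\delta}{144}) = O(\delta) \cdot \text{(const)} \cdot \MEDV \le \frac{\delta}{2}\MaxErr{\PRB}$, which is the form required by Definition~\ref{def:approximating}. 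So the lemma as stated (bounds $2\QUANT$ and $6\QUANT$, absolute) is the intermediate step; converting those to $\frac{\delta}{2}\MaxErr{\PRB}$ happens when the lemma is invoked.

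For the first (forward) direction: given $\FE \in \LCSET$ with $\sum_i \Fe{x_i} = 0$, define $\FEP$ by rounding each $\Fe{x_i}$ down (up, if negative) to the nearest multiple of \QUANT. Then $\Abs{\FeP{x_i} - \Fe{x_i}} \le \QUANT$ for every $i$, so the relaxed Lipschitz condition $\Abs{\FeP{x_i} - \FeP{x_j}} \le \Dist{x_i}{x_j} + 2\QUANT$ holds — wait, I should instead round more carefully so that the slack is only $\QUANT$ total: round all values down to multiples of \QUANT (treating signs uniformly), giving $\Fe{x_i} - \QUANT < \FeP{x_i} \le \Fe{x_i}$, hence $\Abs{\FeP{x_i}-\FeP{x_j}} \le \Dist{x_i}{x_j} + \QUANT$, and $\sum_i \FeP{x_i} \le \sum_i \Fe{x_i} = 0 \le n\QUANT$, so $\FEP \in \QSET[\delta]$. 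Now $\Abs{\Avg{\FEP} - \Avg{\FE}} = \Abs{\frac{1}{n}\sum_i(\FeP{x_i}-\Fe{x_i})} \le \QUANT$, and $\sum_i \Prb{x_i}\Abs{\FeP{x_i} - \Fe{x_i}} \le \QUANT$; combining via $\Abs{\FunAlgErr{\FE}{\PRB} - \FunAlgErr{\FEP}{\PRB}} \le \Abs{\Avg{\FEP}-\Avg{\FE}} + \sum_i \Prb{x_i}\Abs{\FeP{x_i}-\Fe{x_i}} \le 2\QUANT$ gives the first claim.

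For the second (converse) direction: given $\FE \in \QSET[\delta]$, which satisfies only the relaxed Lipschitz condition and $\sum_i \Fe{x_i} \le n\QUANT$, apply Lemma~\ref{lem:lipschitz-adjust} with $s_{x_i} = \QUANT/2$ for all $i$ (so $s_{x_i} + s_{x_j} = \QUANT$, matching the relaxed slack) to obtain a genuine Lipschitz function $\FEP$ with $\Abs{\FeP{x_i} - \Fe{x_i}} \le \QUANT/2$; actually I will use $s_{x_i} = \QUANT$ to be safe, giving $\Abs{\FeP{x_i}-\Fe{x_i}} \le \QUANT$ — but then the relaxed slack needed is $2\QUANT$, so I take $s_{x_i}=\QUANT$ only if the definition in \QSET[\delta] had slack $2\QUANT$; since it has slack $\QUANT$, I must take $s_{x_i} = \QUANT/2$. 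This produces $\FEP$ with $\Abs{\FeP{x_i} - \Fe{x_i}} \le \QUANT/2$ but with $\sum_i \FeP{x_i}$ possibly as large as $n\QUANT + n\QUANT/2$ in absolute value; subtract the constant $\frac{1}{n}\sum_i \FeP{x_i}$ from \FEP to force $\sum_i \FeP{x_i} = 0$ (translation does not affect \LCSET membership nor the error), which shifts each value by at most $\frac{3}{2}\QUANT$. Net effect: $\Abs{\FeP{x_i} - \Fe{x_i}} \le \QUANT/2 + \frac{3}{2}\QUANT = 2\QUANT$, so $\Abs{\Avg{\FEP}-\Avg{\FE}} \le \frac{1}{n}\sum_i \Abs{\FeP{x_i}-\Fe{x_i}} + \frac{1}{n}\Abs{\sum_i \Fe{x_i}} \le 2\QUANT + \QUANT = 3\QUANT$ (here $\Avg{\FEP}=0$ by construction and $\Abs{\Avg{\FE}} \le \QUANT$), and $\sum_i \Prb{x_i}\Abs{\FeP{x_i}-\Fe{x_i}} \le 2\QUANT + \QUANT = 3\QUANT$, wait I need to recount the constants; in any case they sum to at most $6\QUANT$, giving the second claim. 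The main obstacle is bookkeeping: Lemma~\ref{lem:lipschitz-adjust} gives only a one-sided guarantee ($\FeP{x} \le \FePP{x}$ for all extensions), so I must make sure the subsequent re-centering to zero sum does not break the Lipschitz property (it cannot, since adding a constant preserves it) and that I track each of the three perturbations — rounding slack, Lipschitz-adjust slack, and re-centering shift — without double-counting, so that the final absolute bound is genuinely $6\QUANT$. There is nothing deep here; the only subtlety is being disciplined about which constant ($\QUANT$ vs.\ $\QUANT/2$ vs.\ $2\QUANT$) appears where.
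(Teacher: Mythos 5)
Your proof is correct and follows essentially the same route as the paper's: round each value to a multiple of \QUANT for the forward direction, and apply Lemma \ref{lem:lipschitz-adjust} followed by a re-centering translation for the converse, bounding the change in error by $\Abs{\Avg{\FEP}-\Avg{\FE}} + \sum_x \Prb{x}\Abs{\FeP{x}-\Fe{x}}$ in both cases. Your constant-tracking (e.g.\ choosing $s_{x_i}=\QUANT/2$ to match the relaxed slack of \QUANT) is if anything slightly more careful than the paper's, and lands within the stated bounds of $2\QUANT$ and $6\QUANT$.
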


\begin{proof}
For the first part, define \FEP by rounding each \Fe{x_i} down 
(toward 0 for negative values) to the nearest multiple of \QUANT.
Clearly, $\FEP \in \QSET[\delta]$. 
Furthermore, the average changes by at most \QUANT,
and $\sum_x \Prb{x} \Abs{\FeP{x} - \Fe{x}} \leq \sum_x \Prb{x} \QUANT = \QUANT$.

For the second part, first create a Lipschitz continuous function
\FEPP from \FE according to Lemma \ref{lem:lipschitz-adjust}; 
then define $\FeP{x} = \FePP{x} - \Avg{\FEPP}$ for all $x$.
The first step changed each function value by at most \QUANT,
and because $\Avg{\FEPP} \leq \QUANT + \Avg{\FE} \leq 2\QUANT$.
we have that $\Abs{\FeP{x} - \Fe{x}} \leq 3\QUANT$ for all $x$.
Thus, $\Abs{\Avg{\FEP} - \Avg{\FE}} + \sum_x \Prb{x} \cdot \Abs{\FeP{x} - \Fe{x}} 
\leq 6\QUANT$.
\end{proof}

By Lemma \ref{lem:lowerboundkd}, applied with $\DDIM = 1$, any
randomized algorithm must have expected error at least 
$\frac{1}{12n}$.
In particular, substituting the definition of 
$\QUANT = \frac{\delta}{144n}$ gives us that
$6\QUANT \leq \frac{\delta}{2} \cdot \MaxErr{\PRB}$ for all
distributions \PRB. Thus, \QSET[\delta] approximates \LCSET well.

\section{Sampling in the Interval $[0,1]$} \label{sec:continuous}
In this section, we focus on what is probably the most basic version
of the problem: the metric space is the interval $[0,1]$. 
In this continuous case, we can explicitly characterize the optimum
sampling distribution and estimation error.
It is easy to see (and follows from a more general result in
\cite{das}) that the best deterministic algorithm samples the function
at \half and outputs the value read. 
The worst-case error of this algorithm is \quarter.
We prove that randomization can lead to the following improvement.

\begin{theorem} \label{thm:optimal-sampling}
An optimal distribution that minimizes the worst-case expected 
estimation error is to sample uniformly from the interval
$[2-\sqrt{3}, \sqrt{3}-1]$. 
This sampling gives a worst-case error of 
$1-\frac{\sqrt{3}}{2} \approx 0.134$.
\end{theorem}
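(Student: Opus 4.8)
The plan is to establish matching upper and lower bounds. For the upper bound I would take the explicit distribution $\PRB$ that is uniform on $[2-\sqrt3,\sqrt3-1]$ (with identity post-processing, which is w.l.o.g.\ by Theorem~\ref{thm:randpredictionfunction}) and show that for every $\FE\in\LCSET$ we have $\int_0^1 \Prb{x}\Abs{\Fe{x}-\Avg{\FE}}\,dx \le 1-\tfrac{\sqrt3}{2}$. The first simplification is that we may translate so that $\Avg{\FE}=0$, so the error is $\frac{1}{\sqrt3-1-(2-\sqrt3)}\int_{2-\sqrt3}^{\sqrt3-1}\Abs{\Fe{x}}\,dx$; note the interval has length $2\sqrt3-3$ and is symmetric about $\tfrac12$. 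The maximization over $\FE$ is a concave/convex extremization: the objective is convex in the function values, and $\LCSET[0]$ is a polytope-like set cut out by Lipschitz and zero-average constraints, so the worst case is attained at an extreme point, which is a function of the form $\Fe{x}=\pm\Abs{x-a}+c$ (a single ``tent'' or ``valley'') or a piecewise-linear function with slopes $\pm1$ and a few breakpoints. I would argue that the worst case is a V-shaped (or $\Lambda$-shaped) function $\Fe{x}=\Abs{x-a}-\Avg{\Abs{x-a}}$ for some vertex location $a$, reduce to optimizing a one-variable function of $a$, and check that the maximum over $a$ equals $1-\tfrac{\sqrt3}{2}$; the specific endpoints $2-\sqrt3$ and $\sqrt3-1$ should fall out of the first-order optimality condition that makes the error flat in $a$.

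For the lower bound I would invoke Yao's Minimax Principle: it suffices to exhibit a distribution $\DISTR$ over $\LCSET$ such that every deterministic algorithm — equivalently (again by Theorem~\ref{thm:randpredictionfunction}, in its equivalence-uniform form) every algorithm that samples a fixed point $x$ and outputs $\Fe{x}$ — incurs expected error at least $1-\tfrac{\sqrt3}{2}$. The natural hard distribution is supported on a small family of ``tent'' functions: e.g.\ put mass on $\Fe{x}=\Abs{x-a}$ and its vertical reflection $-\Abs{x-a}$ (which have averages $\pm\Avg{\Abs{x-a}}$), for $a$ chosen so that no single point $x$ is simultaneously close to the averages of both. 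For a fixed sample point $x$, the prediction $\Fe{x}$ is forced to be far from $\Avg{\FE}$ for one of the two reflected functions — by the flipping argument already used in the proof of Theorem~\ref{thm:randpredictionfunction}, the expected error against the pair is at least $\tfrac12\Abs{\Avg{\FE}-\Avg{\FLIP{\FE}}}$. I would then choose a mixture over finitely many vertex locations $a$ (or a continuum) so that this quantity, minimized over the sampling point $x$, equals exactly $1-\tfrac{\sqrt3}{2}$; the endpoints $2-\sqrt3,\sqrt3-1$ should again emerge as the range of $x$ for which the bound is tight, certifying optimality of the proposed distribution.

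The main obstacle I anticipate is pinning down the exact extremal configuration on both sides and verifying the constants. On the upper-bound side, one must rule out more complicated piecewise-linear worst-case functions (multiple breakpoints, functions that are not single tents) — this is a convexity/extreme-point argument that needs care because the objective involves an absolute value inside an integral over only the subinterval, while the average is over all of $[0,1]$, so the ``effective'' linear functional is supported on a strict subinterval. On the lower-bound side, the delicate part is choosing the mixing weights over vertex locations $a$ so that the per-point error is a constant function of $x$ on $[2-\sqrt3,\sqrt3-1]$ and no smaller outside; solving that ``equalizing'' condition is where the algebra producing $\sqrt3$ lives. Once both sides are reduced to single-variable optimizations, the remaining calculus (differentiate, set to zero, solve a quadratic, get $2-\sqrt3$ and $\sqrt3-1$) is routine, and the two bounds coincide at $1-\tfrac{\sqrt3}{2}$.
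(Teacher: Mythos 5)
Your overall architecture coincides with the paper's: an upper bound obtained by reducing the adversary, against the uniform distribution on $[2-\sqrt3,\sqrt3-1]$, to tent-shaped functions and then optimizing over the vertex, and a lower bound via Yao's Minimax Principle using a distribution over tents. The genuine gap is in the step you flag as ``the main obstacle,'' and it is not a routine convexity argument. Bauer's maximum principle only tells you that the maximizer of the convex functional $\FE \mapsto \int \Prb{x}\Abs{\Fe{x}}\,dx$ over $\LCSET[0]$ is an extreme point, and the extreme points of the $1$-Lipschitz ball with zero mean are zigzag functions with slope $\pm 1$ almost everywhere and arbitrarily many breakpoints --- not single tents. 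Passing from such zigzags to $\Fe{x}=\Abs{x-\PEAK}-t$ is exactly the content of Theorem~\ref{thm:worstcasefunction}, whose proof takes all of Section~\ref{sec:worst-case-characterization}: one perturbs to finitely many zeros, replaces the function between consecutive zeros by the area-maximizing triangle (Lemma~\ref{lem:worstcasetriangles1}), reduces to at most two zeros in $(c,1-c)$ by mirroring (Lemma~\ref{lem:worstcasetriangles3}), and then does a delicate case analysis. The difficulty you correctly sense --- the objective lives on $[c,1-c]$ while the constraint $\int_0^1\Fe{x}\,dx=0$ lives on all of $[0,1]$ --- is precisely why each local modification must be followed by a renormalization whose effect on the error has to be controlled, and this is where the specific value $c=2-\sqrt3$ is used. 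Your proposal would need to supply this entire reduction.

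On the lower bound, two details of your plan should be corrected, though the plan is salvageable. First, pairing a centered tent $\FE$ with its vertical reflection $-\FE$ buys nothing once identity post-processing is fixed: the expected error of sampling at $x$ against that pair is $\tfrac12(\Abs{\Fe{x}}+\Abs{-\Fe{x}})=\Abs{\Fe{x}}$, which vanishes at the zeros of \FE, so the adversary needs tents with \emph{distinct} vertices. The paper uses exactly two: $\Fe[\PEAK]{x}=\tfrac12+\PEAK^2-\PEAK-\Abs{x-\PEAK}$ with $\PEAK=\tfrac{\sqrt3-1}{2}$ and its horizontal mirror $\Fe[\PEAK]{1-x}$, each with probability $\tfrac12$. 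Second, the resulting per-point error is not equalized over $[2-\sqrt3,\sqrt3-1]$; it is piecewise linear and non-increasing on $[0,\tfrac12]$, constant only on $[\tfrac12-\PEAK^2,\tfrac12]$, with minimum $\PEAK^2=1-\tfrac{\sqrt3}{2}$ at $x=\tfrac12$. So the ``flat in $x$'' equalizing condition you propose to solve is not the right certificate of optimality; the two bounds instead meet through the single-variable optimization over the tent vertex \PEAK in the upper bound.
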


Following the discussion in Section \ref{sec:preliminaries}, we
restrict our analysis w.l.o.g.~to functions $\FE \in \LCSET[0]$, i.e.,
we assume that $\int_0^1 \Fe{x} dx = 0$.
Then, the expected error of a distribution \PRB against input \FE is
$\FunAlgErr{\FE}{\PRB} = \int_0^1 \Prb{x} \Abs{\Fe{x}} dx$.
%We say that \FE is a \todef{worst-case function for \PRB} if it
%maximizes \FunAlgErr{\FE}{\PRB}; because \LCSET[0] is compact, this
%notion is well-defined.
The key part of the proof of Theorem \ref{thm:optimal-sampling} is to
show that when the algorithm samples uniformly over an interval 
$[c,1-c]$, then with loss of only an arbitrarily small $\epsilon$, 
we can focus on functions consisting of just two line segments.

%The key part of the proof of Theorem \ref{thm:optimal-sampling} is the
%characterization of worst-case functions against probability
%distributions \PRB that are uniform over an interval $[c,1-c]$ for
%some $c \leq \half$.

\begin{theorem} \label{thm:worstcasefunction}
For any \PEAK, define
$\Fe[\PEAK]{x} = \half + \PEAK^2 - \PEAK - \Abs{\PEAK-x}$.
If \PRB is uniform over $[c,1-c]$ where $c=2-\sqrt{3}$, then for every $\epsilon > 0$, 
there exists some $\PEAK=\PEAK(\epsilon)$ such that for all 
functions $\FE \in \LCSET[0]$, we have 
$\FunAlgErr{\FE[\PEAK]}{\PRB} \geq \FunAlgErr{\FE}{\PRB} - \epsilon$.
\end{theorem}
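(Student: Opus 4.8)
The plan is to show that among all functions in $\LCSET[0]$, the worst case for the uniform-on-$[c,1-c]$ sampler is (up to $\epsilon$) attained by a "tent" function $\FE[\PEAK]$, i.e.\ a function whose graph consists of two line segments of slopes $\pm 1$ meeting at a single peak at $x=\PEAK$. First I would argue that we may restrict attention to functions that are \emph{piecewise linear with slopes $\pm 1$ everywhere}: given any $\FE \in \LCSET$, replace it on a fine grid by the function that interpolates linearly between grid values but is "pushed to the extreme" — formally, repeatedly replace $\FE$ on a subinterval where $|\FE'| < 1$ by a function with $|\FE'| = 1$ that agrees at the endpoints and has larger $|\FE|$ pointwise on that subinterval (this is always possible by tilting the segment the right way). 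Since $\Prb{x} \geq 0$, increasing $|\FE(x)|$ pointwise can only increase $\FunAlgErr{\FE}{\PRB} = \int_0^1 \Prb{x} |\FE(x)|\,dx$; the only subtlety is maintaining $\Avg{\FE} = 0$, which is handled by a vertical shift that changes the error by at most $O(\text{shift})$, absorbed into $\epsilon$ (this is exactly the kind of $\epsilon$-slack the statement allows). After this reduction, a slope-$\pm1$ function that is zero-mean is determined by the sequence of its "breakpoints" (local maxima and minima).

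Next I would reduce the number of breakpoints to one. The objective $\int_c^{1-c} |\FE(x)|\,dx$ (times the constant density $\frac{1}{1-2c}$) is a linear functional of $|\FE|$, and over slope-$\pm1$ zero-mean functions the set of admissible $\FE$ is, locally, a polytope-like object whose extreme points have few breakpoints. More concretely, I would show that if $\FE$ has two or more local maxima (or a max and a min in the interior), one can perform a local perturbation — raise one peak, lower another, keeping slopes $\pm1$ and the mean fixed — that does not decrease the objective; iterating pushes all the "mass" of $|\FE|$ toward a single peak. One must be careful that $|\FE|$ may vanish and change sign, so the perturbation analysis splits into the region where $\FE > 0$ and where $\FE < 0$; but since the density is \emph{constant} on $[c,1-c]$ and $0$ outside, the marginal effect of moving a peak is easy to compute and is monotone in a fixed direction until the configuration degenerates to a single tent. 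Finally, a single-tent zero-mean slope-$\pm1$ function on $[0,1]$ with peak at $\PEAK$ is forced to have peak height $\half + \PEAK^2 - \PEAK$ (compute $\int_0^1 (\text{peak height} - |\PEAK - x|)\,dx = 0$), which is exactly the given $\FE[\PEAK]$; so the supremum over all such tents, which is a one-dimensional maximization over $\PEAK \in [0,1]$, is attained (or approached) at some $\PEAK(\epsilon)$, giving the theorem.

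The main obstacle I expect is the second step: rigorously showing that reducing to a single breakpoint never decreases $\FunAlgErr{\FE}{\PRB}$. The issue is that $|\FE|$ is not linear in $\FE$ (the absolute value), so a perturbation that helps in the region $\FE > 0$ could hurt in the region $\FE < 0$, and the zero-mean constraint couples the two regions. The right way to handle this is probably a careful exchange argument tailored to the specific density — uniform on $[c,1-c]$, zero elsewhere — exploiting that moving a breakpoint translates a triangular piece of the graph and changes the integral of $|\FE|$ against this density by an expression that is piecewise-linear and has a definite sign as long as the breakpoint stays within $[c,1-c]$ (breakpoints outside the support can be slid freely). One should also verify that the specific value $c = 2-\sqrt{3}$ is not needed for \emph{this} theorem — the reduction to tents should hold for any $c$ — and that the choice $c = 2-\sqrt{3}$ only enters later when one actually maximizes over $\PEAK$ and matches it against the lower bound from Yao's principle, as promised in the discussion preceding Theorem~\ref{thm:optimal-sampling}.
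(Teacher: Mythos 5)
There are two genuine gaps here. The first is in your opening reduction. You propose to force slope $\pm 1$ \emph{everywhere} on $[0,1]$ so as to increase $\Abs{\Fe{x}}$ pointwise, and to dismiss the vertical shift needed to restore $\Avg{\FE}=0$ as "$O(\text{shift})$, absorbed into $\epsilon$." But that shift equals the total area you added, which is not small, and a shift by $\delta$ can \emph{decrease} $\int_0^1 \Prb{x}\Abs{\Fe{x}}\,dx$ by up to $\delta$ (the density integrates to $1$). Since \PRB vanishes outside $[c,1-c]$, any area you add on $[0,c]\cup[1-c,1]$ contributes nothing to the error, while the compensating shift strictly hurts; so the reduction is false as stated. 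The paper's version of this step (Lemma \ref{lem:worstcasetriangles1}) extremizes only between consecutive zeros \emph{inside} $[c,1-c]$, where added area $A$ is collected by the density at rate $\frac{1}{1-2c}>1$ while the renormalization costs at most $A$, for a net gain of at least $\frac{2c}{1-2c}\,A \ge 0$; on the outer intervals the correct direction in which to push \FE depends on the sign of \FE on $(c,1-c)$ and must be handled case by case, not by a blanket "increase $\Abs{\FE}$."

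The second and larger gap is the one you flag yourself: the reduction from many breakpoints to a single tent. Your "raise one peak, lower another" perturbation runs exactly into the sign-coupling problem you describe, and you leave its resolution as "probably a careful exchange argument"---but this step \emph{is} the proof. The paper's mechanism is different and worth noting: after limiting \FE to finitely many zeros (Lemma \ref{lem:worstcasefinitezeros}), it \emph{mirrors} \FE on an interval between two zeros (replacing \Fe{x} by $\Fe{z_3-(x-z_1)}$ on $[z_1,z_3]$), which preserves the mean and the error exactly but merges adjacent sign regions, so that the area-maximization lemma then eliminates a zero (Lemma \ref{lem:worstcasetriangles3}); a further case analysis on the remaining zero, one, or two interior zeros (with another mirroring step) finishes the reduction. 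Finally, your parenthetical claim that the reduction to tents "should hold for any $c$" is unsupported and is in tension with the paper's argument: in the two-interior-zeros subcase the proof derives $z_1>2c$ and $z_2<1-2c$ and needs these to be contradictory, which requires $c\ge\quarter$; so the choice $c=2-\sqrt{3}\approx 0.268$ enters the structural reduction itself, not only the final optimization over \PEAK.
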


%See Figure \ref{fig:worst-function} for an illustration of this type
%of function.
\iffalse
\begin{proof}
Suppose the worst case function \FE for \PRB takes values $a$ and $b$ at 
$x=c$ and $x=1-c$ respectively. 
As before, we assume that $\int_0^1 \Fe{x} dx = 0$.

We consider two cases
\begin{enumerate}
\item Case 1: a,b \geq 0.

Due to the Lipschitz constraints imposed by $a$ and $b$, $\int_c^{1-c} \Abs{\Fe{x}}dx $ is maximized 
when $\Fe{x}= a+\frac{b-a+1-2c}{2} - Abs{x - \frac{b-a+1-2c}{2}}$ for $c \leq x \leq 1-c$

Thus $\Fe{x} \leq \int_c^{1-c} a+\frac{b-a+1-2c}{2} - Abs{x - \frac{b-a+1-2c}{2}} dx$

Also, since $\int_0^1 \Fe{x} dx = 0$, we have

\end{enumerate}
\end{proof}
\fi

All of Section \ref{sec:worst-case-characterization} is devoted to the
proof of Theorem \ref{thm:worstcasefunction}.
Here, we show how to use Theorem \ref{thm:worstcasefunction} to prove
the upper bound from Theorem \ref{thm:optimal-sampling}.

Let $c=2-\sqrt{3}$, so that the algorithm samples uniformly from $[c,1-c]$.
Let $\epsilon > 0$ be arbitrary; we later let $\epsilon \to 0$.
Let $\PEAK=\PEAK(\epsilon)$ be the value whose existence is guaranteed
by Theorem \ref{thm:worstcasefunction}.
We distinguish two cases:
\begin{enumerate}
\item If $\PEAK \leq c$, then
\begin{eqnarray*}
\FunAlgErr{\FE[\PEAK]}{\PRB}
& = &
\frac{1}{1-2c} \cdot \int_c^{1-c} \Abs{\half + \PEAK^2 - \PEAK - \Abs{\PEAK-x}} dx  \\
& = &
\frac{1}{1-2c} \cdot (\half(\PEAK^2 + \half - c)^2 + \half(1-c - \PEAK^2)^2) \\
& = &
\frac{1}{1-2c} \cdot (\PEAK^4 + (\half - c)^2).
\end{eqnarray*}

\item If $\PEAK \geq c$, then
\begin{eqnarray*}
\FunAlgErr{\FE[\PEAK]}{\PRB}
& = &
\frac{1}{1-2c} \cdot \int_c^{1-c} \Abs{\half + \PEAK^2 - \PEAK - \Abs{\PEAK-x}} dx  \\
& = &
\frac{1}{1-2c} \cdot (2\PEAK \Fe{\PEAK} + 2c\PEAK + \Fe{\PEAK}^2 - c - \PEAK^2)\\
& = &
\frac{1}{1-2c} (\PEAK^4 - \PEAK^2 + 2c\PEAK + \quarter - c) \\
& = &
\frac{1}{1-2c} (\PEAK^4 - (\PEAK-c)^2 + (\half - c)^2).
\end{eqnarray*}
\end{enumerate}
The first formula is increasing in \PEAK, and thus maximized at $\PEAK=c$;
at $\PEAK = c$, the value equals that of the second formula, so the
maximization must happen for $\PEAK \geq c$. A derivative test shows
that it is maximized for $\PEAK = \frac{\sqrt{3}-1}{2}$,
giving an error of $1-\frac{\sqrt{3}}{2}$.
By Theorem \ref{thm:worstcasefunction}, for any function \FE, the
error is at most $1-\frac{\sqrt{3}}{2} + \epsilon$, and letting
$\epsilon \to 0$ now proves an upper bound of 
$1-\frac{\sqrt{3}}{2}$ on the error of the given distribution.

Next, we prove optimality of the uniform distribution over
$[2-\sqrt{3}, \sqrt{3}-1]$, by providing a lower bound on all
randomized sampling distributions. 
Again, by Theorem \ref{thm:randpredictionfunction}, we focus only
on algorithms which output the value \Fe{x} after sampling at $x$, by
incurring an error $\epsilon > 0$ that can be made arbitrarily small.
Our proof is based on Yao's Minimax principle: 
we explicitly prescribe a distribution \DISTR over \LCSET[0] such that 
for any deterministic algorithm using the identity function, the
expected estimation error is at least $1 - \frac{\sqrt{3}}{2}$.
Since a deterministic algorithm is characterized completely by its
sampling location $x$, this is equivalent to showing that 
$\Expect[\FE \sim \DISTR]{\Abs{\Fe{x}}} \geq 1 - \frac{\sqrt{3}}{2}$ for all $x$.

We let $\PEAK = \frac{\sqrt{3}-1}{2}$, and define two functions 
$\FEA, \FEB$ as $\FeA{x} = \half+\PEAK^2 -\PEAK - \Abs{x-\PEAK}$
and $\FeB{x}= \FeA{1-x}$. The distribution \DISTR is then simply to
choose each of \FEA and \FEB with probability \half.
Fix a sampling location $x$; by symmetry, we can restrict ourselves to
$x \leq \half$.
Because $\Avg{\FEA} = \Avg{\FEB} = 0$, the expected estimation error is 
%\[ \begin{array}{lcl}
\begin{eqnarray*}
\half (\Abs{\FeA{x}} + \Abs{\FeB{x}})
& = & \half (  \Abs{\half + \PEAK^2 - \PEAK - \Abs{x-\PEAK}} 
            + \Abs{\half + \PEAK^2 - \PEAK - \Abs{1-x-\PEAK}})\\
& = &
\left\{ \begin{array}{ll}
\half - \PEAK, & \mbox{ if } x \leq \PEAK \\
\half - x, & \mbox{ if } \PEAK \leq x \leq \half - \PEAK^2 \\
\PEAK^2, & \mbox{ if } \half - \PEAK^2 \leq x \leq \half.
\end{array} \right.
\end{eqnarray*}
%\end{array} \]

This function is clearly non-increasing in $x$, and thus minimized at
$x=\half$, where its value is $\PEAK^2 = 1-\frac{\sqrt{3}}{2}$.
Thus, even at the best sampling location $x=\half$, the error cannot
be less than $1-\frac{\sqrt{3}}{2}$.
This completes the proof of Theorem \ref{thm:optimal-sampling}.\QED

Notice that the proof of Theorem \ref{thm:optimal-sampling} has an
interesting alternative interpretation. For a (finite) multiset 
$S \subset \LCSET[0]$ of Lipschitz continuous functions \FE with
 $\int_x \Fe{x} dx = 0$, we say that $S$ is 
\todef{$\delta$-close} if there exist $x,y$ such that
$\frac{1}{n} \cdot \sum_{\FE \in S} \Abs{\Fe{x}-y} \leq \delta$.
In other words, the average distance of the functions from a carefully
chosen reference point is at most $\delta$. Then, the proof of Theorem
\ref{thm:optimal-sampling} implies:

\begin{theorem} \label{thm:close}
Every set $S \subseteq \LCSET[0]$ is $(1-\frac{\sqrt{3}}{2})$-close, and
this is tight.
\end{theorem}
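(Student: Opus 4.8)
The plan is to reinterpret the two halves of the proof of Theorem \ref{thm:optimal-sampling} as exactly the two statements we need: the upper bound argument (that the uniform distribution on $[2-\sqrt{3},\sqrt{3}-1]$ achieves worst-case error $1-\frac{\sqrt{3}}{2}$) gives the ``every set is close'' direction, and the lower bound argument (the explicit hard distribution supported on the two tent functions \FEA, \FEB) gives tightness. First I would spell out the reduction for the upper bound: fix a finite multiset $S \subseteq \LCSET[0]$, and apply Theorem \ref{thm:randpredictionfunction} together with the fact (proved in Section \ref{sec:continuous}) that the uniform distribution \PRB on $[c,1-c]$ with $c=2-\sqrt{3}$ satisfies $\WErrA{\PRB} = 1-\frac{\sqrt{3}}{2}$. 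This means $\FunAlgErr{\FE}{\PRB} = \int_c^{1-c}\frac{1}{1-2c}\Abs{\Fe{x}}\,dx \le 1-\frac{\sqrt{3}}{2}$ for \emph{every} $\FE \in \LCSET[0]$, hence averaging over $\FE \in S$,
\[
\int_c^{1-c} \frac{1}{1-2c}\Bigl(\frac{1}{n}\sum_{\FE \in S}\Abs{\Fe{x}}\Bigr) dx \;\le\; 1-\tfrac{\sqrt{3}}{2}.
\]
Since the integrand is a nonnegative function of $x$ whose average over $[c,1-c]$ is at most $1-\frac{\sqrt{3}}{2}$, there must exist a point $x^\star \in [c,1-c]$ with $\frac{1}{n}\sum_{\FE \in S}\Abs{\Fe{x^\star}} \le 1-\frac{\sqrt{3}}{2}$; taking $y=0$ in the definition of $\delta$-closeness then shows $S$ is $(1-\frac{\sqrt{3}}{2})$-close.

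For tightness, I would exhibit a single two-element multiset that is not $\delta$-close for any $\delta < 1-\frac{\sqrt{3}}{2}$, namely $S=\{\FEA,\FEB\}$ with $\PEAK = \frac{\sqrt{3}-1}{2}$ as defined at the end of the proof of Theorem \ref{thm:optimal-sampling}. There it is computed that for every sampling location $x$ (using symmetry to assume $x \le \half$),
\[
\tfrac12\bigl(\Abs{\FeA{x}} + \Abs{\FeB{x}}\bigr) \;\ge\; \PEAK^2 \;=\; 1-\tfrac{\sqrt{3}}{2},
\]
with the minimum attained at $x=\half$. The only subtlety is that the definition of $\delta$-closeness allows an arbitrary reference value $y$, not just $y = \Avg{\FE}$; so I must argue $\frac{1}{2}(\Abs{\FeA{x}-y}+\Abs{\FeB{x}-y}) \ge 1-\frac{\sqrt3}{2}$ for \emph{all} $x$ and \emph{all} $y$. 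This follows from the inverse triangle inequality: $\Abs{\FeA{x}-y}+\Abs{\FeB{x}-y} \ge \Abs{\FeA{x}-\FeB{x}}$, and since $\Avg{\FEA}=\Avg{\FEB}=0$ the computation in the proof of Theorem \ref{thm:optimal-sampling} (which bounds $\half(\Abs{\FeA{x}}+\Abs{\FeB{x}})$, and in the relevant regime actually bounds $\half\Abs{\FeA{x}-\FeB{x}}$ since \FeA and \FeB have opposite signs near $x=\half$) carries over; alternatively one checks directly that $\min_y \max_x$-type reasoning reduces to the case $y$ between $\FeA{x}$ and $\FeB{x}$, where the sum equals $\Abs{\FeA{x}-\FeB{x}} \ge 2\PEAK^2$. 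Either way the per-function average distance from any reference point is at least $1-\frac{\sqrt{3}}{2}$.

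The main obstacle I anticipate is precisely this last point — handling the free reference value $y$ in the closeness definition rather than the fixed $y=\Avg{\FE}=0$ used implicitly in Theorem \ref{thm:optimal-sampling}. One must verify that the minimizing $y$ over the two-point set $\{\FeA{x},\FeB{x}\}$ is any value in the interval between them, making $\frac1n\sum\Abs{\Fe{x}-y} = \frac12\Abs{\FeA{x}-\FeB{x}}$, and then that this quantity — which near $x=\half$ coincides with $\half(\Abs{\FeA{x}}+\Abs{\FeB{x}})$ because the two tents straddle zero with opposite signs there, and elsewhere is no smaller — never drops below $\PEAK^2$. This is a short case analysis over the same three regimes ($x\le\PEAK$, $\PEAK\le x\le\half-\PEAK^2$, $\half-\PEAK^2\le x\le\half$) already appearing in the proof of Theorem \ref{thm:optimal-sampling}, so no genuinely new computation is needed.
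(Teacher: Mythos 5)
Your upper-bound direction is correct and is exactly the reduction the paper intends: the pointwise bound $\frac{1}{1-2c}\int_c^{1-c}\Abs{\Fe{x}}\,dx \le 1-\frac{\sqrt{3}}{2}$ for every $\FE\in\LCSET[0]$ (the upper-bound half of Theorem~\ref{thm:optimal-sampling}), averaged over the finite multiset $S$, yields a point $x^\star$ with $\frac1n\sum_{\FE\in S}\Abs{\Fe{x^\star}}\le 1-\frac{\sqrt{3}}{2}$, and $y=0$ then witnesses closeness.

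The tightness direction, however, contains a genuine error, located exactly at the step you flagged as the main obstacle. With $\PEAK=\frac{\sqrt{3}-1}{2}$ one has $\PEAK+\PEAK^2=\half$, and therefore $\FeA{\half}=\FeB{\half}=\PEAK^2>0$: the two tent functions take the \emph{same positive value} at $x=\half$. (In fact both are positive on all of $(\PEAK,1-\PEAK)$, which contains $\half$; they do not ``straddle zero with opposite signs near $x=\half$'' as you claim.) Hence $\min_y \half(\Abs{\FeA{\half}-y}+\Abs{\FeB{\half}-y})=\half\Abs{\FeA{\half}-\FeB{\half}}=0$, attained at $y=\PEAK^2$, so the multiset $\{\FEA,\FEB\}$ is actually $0$-close and cannot witness tightness. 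The underlying issue is that the paper's lower-bound computation controls only the $y=0$ case (identity post-processing); the free reference value $y$ in the closeness definition amounts to a constant shift of the prediction, and the two-function distribution is defeated by the shift $y=\PEAK^2$ at $x=\half$.

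The repair is short but requires changing the witness: symmetrize under negation and take $S=\{\FEA,\FEB,-\FEA,-\FEB\}$ (negations of members of $\LCSET[0]$ are 1-Lipschitz with zero integral, hence again in $\LCSET[0]$). For every $x$ and $y$ the triangle inequality gives $\Abs{\FeA{x}-y}+\Abs{-\FeA{x}-y}\ge 2\Abs{\FeA{x}}$ and likewise for $\FEB$, so the average distance is at least $\half(\Abs{\FeA{x}}+\Abs{\FeB{x}})\ge\PEAK^2=1-\frac{\sqrt{3}}{2}$ by the three-regime computation already carried out in the proof of Theorem~\ref{thm:optimal-sampling}. With this substitution your argument goes through.
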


\subsection{Proof of Theorem \ref{thm:worstcasefunction}}
\label{sec:worst-case-characterization}

We begin with the following lemma which
guarantees that we can focus on functions \FE with finitely many
zeroes.

\begin{lemma} \label{lem:worstcasefinitezeros}
%%Assume that \PRB is uniform over some interval $[c,1-c]$.
For any $\epsilon > 0$ and any function \FE, there exists a function
\FEP such that there are at most $O(1/\epsilon)$ points $x$ with
$\FeP{x} = 0$, and 
$\FunAlgErr{\FEP}{\PRB} \geq \FunAlgErr{\FE}{\PRB} - \epsilon$, 
for all distributions \PRB.
\end{lemma}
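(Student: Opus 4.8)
The plan is to start from an arbitrary $\FE \in \LCSET[0]$ and surgically replace it by a function $\FEP$ whose zero set is finite and small, while only decreasing $\FunAlgErr{\cdot}{\PRB}$ by at most $\epsilon$ \emph{uniformly over all distributions} \PRB. The guiding observation is that $\FunAlgErr{\FE}{\PRB} = \int_0^1 \Prb{x} \Abs{\Fe{x}} dx$ only depends on the \emph{pointwise absolute values} $\Abs{\Fe{x}}$, so the strategy is to make $\Abs{\Fe{x}}$ pointwise larger (or nearly so) while keeping $\FEP$ Lipschitz and keeping $\int_0^1 \FeP{x} dx = 0$. If we can make $\Abs{\FeP{x}} \geq \Abs{\Fe{x}} - \epsilon$ for every $x$, then $\FunAlgErr{\FEP}{\PRB} \geq \FunAlgErr{\FE}{\PRB} - \epsilon$ for every \PRB simultaneously, which is exactly the form of the conclusion.

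First I would partition $[0,1]$ into $O(1/\epsilon)$ subintervals on which $\FE$ does not change sign ``too often''. Concretely, between two consecutive zeros of $\FE$ the function has constant sign; on each such sign-interval $[u,v]$, replace $\FE$ by the ``tent'' function of the same sign that agrees with $\FE$ at the endpoints $u,v$, has slope $\pm 1$, and peaks in the middle — i.e.\ $\FeP{x} = \sigma\bigl(\min(\Fe{u}\sigma + (x-u),\ \Fe{v}\sigma + (v-x))\bigr)$ where $\sigma \in \{+1,-1\}$ is the sign on $[u,v]$. This is the pointwise-largest $1$-Lipschitz function of the prescribed sign matching the endpoint values, so $\Abs{\FeP{x}} \geq \Abs{\Fe{x}}$ on every sign-interval, and $\FEP$ is globally $1$-Lipschitz (slopes are in $[-1,1]$ and the endpoint values are shared with $\FE$, hence consistent across interval boundaries where $\FE$ vanishes). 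At this point $\Abs{\FeP{x}} \geq \Abs{\Fe{x}}$ everywhere, but $\FEP$ may have infinitely many zeros (if $\FE$ does) and, more seriously, $\int_0^1 \FeP{x}\,dx$ has increased above $0$, so $\FEP \notin \LCSET[0]$.

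The two remaining issues are handled as follows. To cut the number of zeros down to $O(1/\epsilon)$: cover $[0,1]$ by $O(1/\epsilon)$ intervals of length $\Theta(\epsilon)$; within each, if $\FE$ (equivalently, the tent-replaced function) changes sign, the Lipschitz bound forces $\Abs{\Fe{x}} \le \Theta(\epsilon)$ throughout that interval, so we may simply zero out $\FEP$ on that whole interval — this costs at most $\Theta(\epsilon)$ pointwise, hence at most $\Theta(\epsilon)$ in $\FunAlgErr{\cdot}{\PRB}$, and leaves each of the remaining ``large'' intervals sign-pure with at most one tent peak and hence $O(1)$ zeros each, for $O(1/\epsilon)$ zeros total. (Rescale the working $\epsilon$ by a constant to absorb the $\Theta$.) To restore the mean-zero constraint: after the above steps $\int_0^1 \FeP{x}\,dx =: m$ satisfies $0 \le m \le \epsilon$ — because each modification only changed $\FEP$ relative to the original mean-zero $\FE$ by a pointwise amount $O(\epsilon)$ (the tent step only increased $\Abs{\FeP{}}$ but also kept it within $O(\epsilon)$ of $\Abs{\FE}$ away from peaks... more carefully, one tracks that $\int |\FeP{} - \Fe{}| = O(\epsilon)$) — and we subtract the constant $m$ from $\FEP$. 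Subtracting a constant preserves Lipschitz-ness, restores $\int \FeP{} = 0$, preserves the zero-count up to an additive constant, and changes each $\Abs{\FeP{x}}$ by at most $m \le \epsilon$. Combining, $\Abs{\FeP{x}} \ge \Abs{\Fe{x}} - O(\epsilon)$ pointwise, giving $\FunAlgErr{\FEP}{\PRB} \ge \FunAlgErr{\FE}{\PRB} - O(\epsilon)$ for all \PRB; renaming $\epsilon$ finishes the lemma.

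The main obstacle I anticipate is bookkeeping the interaction between the three operations — tent-filling, interval-zeroing, and mean-recentering — to guarantee a clean \emph{pointwise} bound $\Abs{\FeP{x}} \ge \Abs{\Fe{x}} - \epsilon$ rather than merely an $L^1$ or average bound; the pointwise form is what makes the conclusion hold for \emph{all} \PRB at once. The delicate point is that tent-filling can in principle pull $\Abs{\FeP{x}}$ \emph{below} $\Abs{\Fe{x}}$ only near the boundaries of a sign-interval where $\FE$ itself is small, so one must verify that in those regions the loss is already $O(\epsilon)$; this is where the $\Theta(\epsilon)$-grid argument does double duty, simultaneously bounding the zero count and controlling the pointwise error near sign changes.
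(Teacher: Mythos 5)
Your overall strategy---modify $\FE$ pointwise by at most $\epsilon$ so that the error bound holds simultaneously for all \PRB---is the right one, and it is the same idea the paper uses. But two of your steps have genuine gaps. First, the tent-filling step is where the argument breaks: replacing $\FE$ on a sign-interval $[u,v]$ by the maximal Lipschitz function of that sign can change the function by up to $(v-u)/2$ in the interior, and $v-u$ can be a constant (take $\FE$ to be a plateau of height $0.01$ over an interval of length $1/2$; the tent has height $\approx 1/4$). Hence $\int_0^1 \Abs{\FeP{x}-\Fe{x}}\,dx$ and the new mean $m=\int_0^1\FeP{x}\,dx$ are $\Omega(1)$, not $O(\epsilon)$, and the final recentering by $m$ then costs $\Omega(1)$ in your pointwise bound $\Abs{\FeP{x}-m}\geq\Abs{\Fe{x}}-\Abs{m}$. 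The $\Theta(\epsilon)$-grid does not rescue this, since it only controls $\FE$ near sign changes, not on the bulk of a long sign-interval. (Tent-filling is in fact not needed for this lemma at all---the lemma does not ask you to increase the error, only not to decrease it by more than $\epsilon$---and in the paper it is the content of the separate Lemma \ref{lem:worstcasetriangles1}, which is proved only for uniform distributions over intervals.) Second, setting $\FEP$ identically to zero on an entire grid cell produces a continuum of points with $\FeP{x}=0$, directly contradicting the conclusion that there are at most $O(1/\epsilon)$ such points.

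The paper's construction sidesteps both problems: in each cell $[(j-1)\epsilon,j\epsilon]$ it modifies $\FE$ only on $[x^-,x^+]$, where $x^-$ and $x^+$ are the first and last zeros of $\FE$ in that cell (so $\Abs{\Fe{x}}\leq\epsilon$ there automatically, and the modification is $O(\epsilon)$ pointwise for free), and it replaces $\FE$ there by a two-piece zigzag whose parameter $\alpha$ is chosen so that the integral over $[x^-,x^+]$ is exactly preserved. Thus the mean never drifts, no recentering is needed, and each cell retains at most three isolated zeros. If you keep your grid but drop the tent-filling and replace ``zero out the cell'' by such an integral-preserving zigzag on the sub-interval spanned by the cell's zeros, your proof goes through.
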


\begin{proof}
Let $\epsilon > 0$ be arbitrary. 
We prove the lemma by modifying
\FE to ensure that it meets the requirements, and
showing that its estimation error decreases by at most $\epsilon$ in
the process.

We replace \FE with a function \FEP with the following properties:
(1) \FEP is Lipschitz continuous,
(2) $\int_0^1 \Fe{x} dx = \int_0^1 \FeP{x} dx$, 
(3) $\Abs{\Fe{x} - \FeP{x}} \leq \epsilon$ for all $x$, and
(4) for each $j = 1, \ldots, 1/\epsilon$, the set
$\ZS{\FEP}{j} = \Set{x \in [(j-1)\epsilon, j\epsilon]}{\FeP{x} = 0}$ 
contains at most three points. 
The error can change by at most $\epsilon$ due to the third condition,
and the fourth condition ensures the bound on the number of zeroes.

To describe the construction, first focus on one interval $[(j-1)\epsilon, j\epsilon]$,
and define 
$x^- = \inf \ZS{\FE}{j}$,  $x^+ = \sup \ZS{\FE}{j}$, and $\delta = x^+-x^-$. 
Now let $\alpha = \frac{\delta^2 + 4\int_{x^-}^{x^+} \Fe{x} dx}{4\delta}$,
and define the function 
$\FEP$ such that
\[ \begin{array}{lcl}
\FeP{x}
& = &
\left\{ \begin{array}{ll}
\alpha-\Abs{x^-+\alpha-x}, & \mbox{ if } x \in [x^-, x^-+2\alpha]\\
\alpha-\delta/2+\Abs{x^+ +\alpha - \delta/2-x}, & \mbox{ if } x \in [x^-+2\alpha, x^+] \\
\Fe{x}, & \mbox{ if } x \in [(j-1)\epsilon, j\epsilon] \setminus [x^-, x^+].
\end{array} \right.
\end{array} \]
Intuitively, this replaces the function on the interval by a zigzag
shape with the same integral that has the same leftmost and rightmost
zero. 

Do this for each $j$. By the careful choice of $\alpha$, the integral
remains unchanged. Because each function value changes by at most
$\delta \leq \epsilon$, the third condition is satisfied; the fourth
condition is directly by construction, and Lipschitz continuity is
obvious.
\end{proof}

Next, we show a series of lemmas restricting the functions \FE under
consideration. When we say that \FE has a certain property without
loss of generality, we mean that changing \FE to \FEP with that
property can be accomplished while ensuring that
$\FunAlgErr{\FEP}{\PRB} \geq \FunAlgErr{\FE}{\PRB}$ for all
uniform distributions \PRB over intervals $[c,1-c]$.
Since our goal is to characterize the functions that make the
algorithm's error large, this restriction is indeed without loss of
generality.

We focus on points $x \in (c, 1-c)$ with $\Fe{x} = 0$.
Let $c \leq z_1 \leq \ldots \leq z_k \leq 1-c$ be all such points.
For ease of notation, we write $z_0 = c$ and $z_{k+1} = 1-c$.
By continuity, \Fe{x} has the same sign for all $x \in (z_i, z_{i+1})$,
for $i=0, \ldots, k$.
We show that w.l.o.g., \FE is as large as possible over areas of
the same sign.

\begin{lemma} \label{lem:worstcasetriangles1}
Assume w.l.o.g.~that $\Fe{x} \geq 0$ for all $x \in [z_i, z_j]$, with $j > i$.
%(An analogous statement holds if $\Fe{x} \leq 0$ for all $x \in [z_i, z_j]$.)
Then, w.l.o.g., \FE maximizes the area over $[z_i, z_j]$ subject to the
Lipschitz constraint and the function values at $z_i$ and $z_j$. More
formally, w.l.o.g., \FE satisfies,
\begin{enumerate}
\item If $1 \leq i < j \leq k$, then $\Fe{x} = \min(x-z_i, z_j-x)$ for all $x \in [z_i, z_j]$.
\item If $i=0$, then $\Fe{x} = \min(\Fe{c} + (x-c), z_1 - x)$ for all $x \in [c, z_1]$, and
if $i=k$, then $\Fe{x} = \min(\Fe{1-c} + (1-c)-x, x - z_k)$ for all $x \in [z_k, 1-c]$.
\end{enumerate}
\end{lemma}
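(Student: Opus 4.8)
The plan is to argue by a local exchange argument: I will replace \FE on the interval $[z_i,z_j]$ by the pointwise-largest Lipschitz function consistent with the boundary values and the sign constraint, and show that this only increases $\FunAlgErr{\FE}{\PRB}$ for every uniform distribution \PRB over $[c,1-c]$. Concretely, define $\FEP$ to equal \FE outside $[z_i,z_j]$, and inside to be the ``upper envelope'': in case~1, the function pinned to $0$ at both endpoints $z_i,z_j$, so $\FeP{x}=\min(x-z_i,z_j-x)$; in case~2, the analogous envelope that respects the (possibly nonzero) value at the outer endpoint $c$ (resp.\ $1-c$) and the value $0$ at the inner zero $z_1$ (resp.\ $z_k$). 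The first thing to check is that $\FEP$ is still Lipschitz continuous and agrees with \FE at $z_i,z_j$ (hence the modification is ``local'' and does not disturb the rest of the function); this is immediate since the envelope is a min of $1$-Lipschitz functions with the correct boundary values, and on $[z_i,z_j]$ we have $0 \le \Fe{x} \le \FeP{x}$ because \FE itself is $1$-Lipschitz, nonnegative, and pinned at the endpoints.

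The subtlety is that we are restricted to functions in $\LCSET[0]$, i.e.\ $\int_0^1 \FeP{x}\,dx = 0$, and the envelope has a strictly larger integral on $[z_i,z_j]$. So the raising must be compensated by a vertical shift, or more carefully, the statement of the lemma must be read as: we may assume \FE has this form \emph{after} the zero-set and integral have been fixed by the earlier reductions, and any further shift needed to restore $\int_0^1 = 0$ is absorbed in the induction/normalization. I would phrase this as: first show $\FunAlgErr{\FEP}{\PRB}\geq \FunAlgErr{\FE}{\PRB}$ treating $\overline{\FE}$ as a free additive constant, then note that because we only ever compare against the \emph{worst-case} function, replacing \FE by $\FEP$ and then re-centering preserves the property of being a worst case up to the $\epsilon$ slack already present in Theorem~\ref{thm:worstcasefunction}. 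The key inequality $\FunAlgErr{\FEP}{\PRB}\geq \FunAlgErr{\FE}{\PRB}$ itself is easy: for \PRB uniform on $[c,1-c]$, $\FunAlgErr{\FE}{\PRB}=\frac{1}{1-2c}\int_c^{1-c}\Abs{\Fe{x}-\Avg{\FE}}\,dx$, and on $[z_i,z_j]$ where $\Fe{x}\geq 0$ we have $\FeP{x}\geq\Fe{x}\geq 0$, which only moves $\Fe{x}$ farther from (or no closer to) the negative-or-small quantity $\Avg{\FE}$; outside $[z_i,z_j]$ nothing changes, and $\Avg{\FEP}$ only grows, which helps on the positive part and is handled by the re-centering step.

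The main obstacle I anticipate is precisely the bookkeeping around the integral constraint and the sign of $\Avg{\FE}$: raising \FE on one hump increases the global average, which could in principle \emph{decrease} $\Abs{\Fe{x}-\Avg{\FE}}$ on some other positive hump. The clean way around this is to prove the lemma for the relaxed class $\LCSET$ (no integral constraint) where the exchange is monotone and unconditional, and only at the very end invoke that translating a function does not change $\FunAlgErr{\cdot}{\PRB}$ (as noted in Section~\ref{sec:preliminaries}) to return to $\LCSET[0]$; combined with Lemma~\ref{lem:worstcasefinitezeros} this gives that the worst case is attained (up to $\epsilon$) by a function built from these triangular pieces, which is exactly what the remaining lemmas in Section~\ref{sec:worst-case-characterization} will whittle down to the two-segment form $\FE[\PEAK]$. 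The second, more routine obstacle is simply verifying that the explicit formulas in items~1 and~2 are the correct envelopes — in case~2 one must check that $\min(\Fe{c}+(x-c),\,z_1-x)$ indeed passes through $0$ at $x=z_1$ and matches $\Fe{c}$ at $x=c$, which follows from $\Fe{c}\geq 0$ (the sign assumption) and the Lipschitz bound $\Fe{c}\leq z_1-c$.
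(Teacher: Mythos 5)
Your construction of $\FEP$ (replace \FE on $[z_i,z_j]$ by the upper envelope, keep it elsewhere) is exactly the paper's first step, and you correctly identify the real difficulty: raising the hump increases $\Avg{\FEP}$, and the subsequent re-centering can \emph{decrease} $\Abs{\Fe{x}-\Avg{\FEP}}$ at points outside $[z_i,z_j]$ where \FE is positive. But your proposed resolution of that difficulty does not work. Proving the exchange ``for the relaxed class \LCSET where it is monotone and unconditional'' and then invoking translation invariance is circular: the estimation error of $\FEP$ is $\int \Prb{x}\,\Abs{\FeP{x}-\Avg{\FEP}}\,dx$, not $\int \Prb{x}\,\Abs{\FeP{x}}\,dx$, and the passage from the latter (which is monotone under the exchange) to the former (which is what you must bound) is precisely the re-centering loss you are trying to avoid. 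Translation invariance is irrelevant because the modification is not a translation. Likewise, absorbing the loss into ``the $\epsilon$ slack of Theorem~\ref{thm:worstcasefunction}'' fails: the potential decrease from re-centering is of size comparable to $\Avg{\FEP}$, a quantity not controlled by $\epsilon$, and the lemma is in fact an exact w.l.o.g.\ statement.

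What is missing is a quantitative accounting, which is how the paper closes the argument. Writing $\FEPP = \FEP - \Avg{\FEP}$, one bounds $\int_c^{1-c}\bigl(\Abs{\FePP{x}}-\Abs{\Fe{x}}\bigr)\,dx$ in two pieces: on $[z_i,z_j]$, since $\FeP{x}\geq\Fe{x}\geq 0$, the integrand is at least $(\FeP{x}-\Fe{x})-\Avg{\FEP}$; elsewhere in $[c,1-c]$ it is at least $-\Avg{\FEP}$ by the triangle inequality. Summing, the gain is $\int_{z_i}^{z_j}(\FeP{x}-\Fe{x})\,dx = \Avg{\FEP}$ (the average is taken over all of $[0,1]$, which has length $1$), while the total loss is at most $(1-2c)\,\Avg{\FEP}$ (the shift only hurts on the sampled interval, of length $1-2c$). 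The net change is therefore at least $2c\,\Avg{\FEP}\geq 0$. This comparison between the full added area and the length of the sampled interval is the essential content of the lemma, and it is absent from your proposal.
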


\begin{proof}
We prove the first part here (the proof of the second part is analogous). 
Define a function \FEP as
$\FeP{x} =  \min(x-z_i, z_j-x)$ for $x \in[z_i,z_j]$, and
$\FeP{x} = \Fe{x}$ otherwise.
Let $\FEPP = \FEP - \Avg{\FEP}$, so that \FEPP is renormalized to
have integral 0.
Since $\FeP{x} \geq \Fe{x}$ for all $x$, and $\Avg{\FE} = 0$, we
have that $\Avg{\FEP} \geq 0$. Then
%\[ \begin{array}{lcl}
\begin{eqnarray*}
& & \int_c^{1-c} \Abs{\FePP{x}} - \Abs{\Fe{x}} dx\\
& = &\int_{z_i}^{z_j} \Abs{\FePP{x}} - \Abs{\Fe{x}} dx + \int_c^{z_i} \Abs{\Fe{x} - \Avg{\FEP}} - \Abs{\Fe{x}} dx +
  \int_{z_j}^{1-c} \Abs{\Fe{x} - \Avg{\FEP}} - \Abs{\Fe{x}} dx \\
& \geq & \int_{z_i}^{z_j} (\Abs{\FeP{x} - \Avg{\FEP}} - \Abs{\FeP{x}})
+ (\Abs{\FeP{x}} - \Abs{\Fe{x}}) dx  - (1 - 2c - (z_j-z_i)) \Avg{\FEP} \\
& \geq & \int_{z_i}^{z_j} \Abs{\FeP{x}} - \Abs{\Fe{x}} dx -
\int_{z_i}^{z_j} \Avg{\FEP} dx
 - (1- 2c - (z_j-z_i)) \Avg{\FEP} \\
& = & \int_{z_i}^{z_j} \FeP{x} - \Fe{x} dx - (1-2c) \Avg{\FEP} \\
& = & 2c \cdot \Avg{\FEP} \\
& \geq & 0.
\end{eqnarray*}
%\end{array} \]

Thus, the estimation error of \FEPP is at least as large as the one for
\FE, so w.l.o.g., \FE satisfies the statement of the lemma.
\end{proof}

\begin{lemma} \label{lem:worstcasetriangles3}
W.l.o.g., $k \leq 2$, i.e., there are at most two
points $x \in (c, 1-c)$ such that $\Fe{x} = 0$.
\end{lemma}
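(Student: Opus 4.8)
The plan is to use Lemma \ref{lem:worstcasetriangles1} to reduce any function to a piecewise-linear ``sawtooth'' on $(c,1-c)$ with slopes $\pm 1$, and then argue that having three or more interior zeros is never better than having at most two. After applying Lemma \ref{lem:worstcasetriangles1} on each maximal sign-interval $[z_i,z_{i+1}]$, the function \FE on $(c,1-c)$ consists of a sequence of triangular ``bumps'' (alternating above and below the axis), possibly with partial bumps at the two ends. Suppose $k \geq 3$, so there are at least three interior zeros and hence at least two full interior bumps (say a positive bump on $[z_i,z_{i+1}]$ and a negative bump on $[z_{i+1},z_{i+2}]$, or vice versa). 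The strategy is to show that we can ``merge'' or ``flatten'' adjacent bumps of opposite sign, decreasing $k$ by at least one, without decreasing $\FunAlgErr{\FE}{\PRB}$ for any uniform \PRB over $[c,1-c]$, and while preserving $\int_0^1 \Fe{x}\,dx = 0$ (renormalizing by a vertical shift as in the previous proof).

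The key step I would carry out: take two adjacent full bumps of opposite sign and of ``small'' total width, and replace them by a single monotone linear piece (or a single bump on the combined interval) that has the same integral over that combined interval and respects the Lipschitz constraint and the endpoint values $\Fe{z_i}, \Fe{z_{i+2}}$. Since \PRB is uniform over $[c,1-c]$, the contribution of the reshaped portion to $\int \Prb{x}\Abs{\Fe{x}}\,dx$ is just $\frac{1}{1-2c}\int \Abs{\Fe{x}}\,dx$ over that combined interval, and among Lipschitz functions with prescribed endpoint values and prescribed (signed) integral on a fixed interval, $\int \Abs{\cdot}$ is maximized by a function that does not cross zero in the interior (pushing everything to one side increases the $L^1$ norm) — this is essentially the same ``maximize the area'' move already used in Lemma \ref{lem:worstcasetriangles1}, plus the $2c\Avg{\FEP} \geq 0$ slack from renormalization to absorb the effect on the $[0,c]$ and $[1-c,1]$ pieces. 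Iterating this merge until no two adjacent interior bumps remain forces $k \leq 2$.

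The main obstacle I anticipate is handling the sign bookkeeping and the endpoint constraints cleanly: when we merge a positive and a negative bump, the combined integral could have either sign, so the resulting single piece might need to dip below (or stay above) zero depending on $\Fe{z_i}$, $\Fe{z_{i+2}}$, and the integral — and we must make sure the replacement still satisfies the Lipschitz condition and the requirement $\FeP{c} = \Fe{c}$, $\FeP{1-c} = \Fe{1-c}$ are not even needed since $z_i, z_{i+2}$ are interior, but the adjacency to the end-bumps must be respected. A clean way to sidestep case analysis is to argue monotonically: define the reshaping only for the pair of bumps that is ``innermost'' or has smallest width, show the $L^1$-over-the-interval quantity weakly increases, show the renormalization shift is nonnegative (or bound $|\Avg{\FEP}|$ and absorb it via the $[0,c]\cup[1-c,1]$ terms exactly as in Lemma \ref{lem:worstcasetriangles1}), and conclude by induction on $k$. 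Alternatively — and this may be the slicker route the authors take — one can observe that after Lemma \ref{lem:worstcasetriangles1} the only free parameters are the positions of the zeros and the two end-values, and that $\FunAlgErr{\FE}{\PRB}$ as a function of these parameters is maximized at a boundary of the feasible region where bumps collapse, leaving $k \leq 2$.
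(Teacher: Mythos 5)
There is a genuine gap in your key step. You propose to merge two adjacent opposite-sign bumps into a single piece of one sign that has \emph{the same signed integral} over the combined interval, and you justify this with the claim that, subject to prescribed endpoint values and prescribed signed integral, $\int\Abs{\cdot}$ is maximized by a function that does not cross zero. That claim is false, and in the relevant situation it fails in the worst possible way: the endpoints $z_i$ and $z_{i+2}$ are zeros of \FE, and if the two bumps have (unsigned) areas $a>0$ and $b>0$, then the original contributes $a+b$ to $\int\Abs{\Fe{x}}dx$ over $[z_i,z_{i+2}]$, while any single-signed replacement with the same signed integral $a-b$ contributes only $\Abs{a-b}<a+b$. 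So your merge strictly \emph{decreases} the error contribution rather than increasing it, and the induction on $k$ does not go through. (Your fallback remark that the error is ``maximized at a boundary of the feasible region where bumps collapse'' is an unproven hope, not an argument; indeed collapsing bumps is exactly the move that loses $L^1$ mass.)

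The paper avoids this by never changing the $L^1$ mass at the critical step: given three interior zeros $z_1\leq z_2\leq z_3$, it \emph{mirrors} \FE on $[z_1,z_3]$, setting $\FeP{x}=\Fe{z_3-(x-z_1)}$ there and $\FeP{x}=\Fe{x}$ elsewhere. Since \PRB is uniform on $[c,1-c]$ and $[z_1,z_3]\subseteq(c,1-c)$, this preserves Lipschitz continuity, the average, and the error \emph{exactly}; its only purpose is combinatorial: after the reflection, the sign of \FEP on $[c,z_1]$ agrees with its sign on $[z_1,z_1+z_3-z_2]$ (and similarly near $z_3$), so $z_1$ and $z_3$ are no longer sign changes. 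Then Lemma \ref{lem:worstcasetriangles1}, applied to the enlarged same-sign interval, removes the superfluous zeros while only increasing the error (with the $2c\Avg{\FEP}\geq 0$ slack absorbing the renormalization, as you correctly anticipated). If you want to salvage your write-up, replace the ``merge with the same signed integral'' step by this measure-preserving rearrangement; the rest of your outline (reduce via Lemma \ref{lem:worstcasetriangles1}, iterate until $k\leq 2$) then matches the paper's argument.
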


\begin{proof}
Assume that $\Fe{z_1} = \Fe{z_2} = \Fe{z_3} = 0$.
Consider mirroring the function on the interval $[z_1,z_3]$.
Formally, we define $\FeP{x} = \Fe{z_3-(x-z_1)}$ 
if $x \in [z_1, z_3]$,
and $\FeP{x} = \Fe{x}$ otherwise.

\iffalse
\[ \begin{array}{lcl}
\FeP{x} & = & 
\CaseDist{\Fe{z_3-x}}{x \in [z_1, z_3]}{\Fe{x}}
\end{array} \]
\fi

Clearly, \FEP is Lipschitz continuous and has the same
average and same expected estimation error as \FE.
However, the signs of \FEP
on the intervals $[c,z_1]$ and $[z_1, z_1+z_3-z_2]$ are now the same;
similarly for the intervals $[z_1+z_3-z_2, z_3]$ and $[z_3,1-c]$.
Thus, applying Lemma \ref{lem:worstcasetriangles1} , we can further reduce the 
number of points $x$ with $\Fe{x} = 0$, without decreasing the estimation
error. 
\end{proof}

Hence, it suffices to focus on functions \FE that have at most
two points $z \in (c, 1-c)$ with $\Fe{z} = 0$. 
We distinguish three cases accordingly:

\begin{enumerate}
\item If there is no point $z \in (c,1-c)$ with 
$\Fe{z} = 0$, then \Fe{c} and \Fe{1-c} have the same sign; 
without loss of generality, \FE is negative over $(c,1-c)$. 
Then, the expected error is maximized when 
$\int_0^{c} \Fe{x} dx$ and $\int_{1-c}^{1} \Fe{x} dx$ are as positive as possible, 
subject to the Lipschitz condition and the constraint that $\int_0^1 \Fe{x} dx = 0$. 
Otherwise, we could increase the value of $\int_0^{c} \Fe{x} dx$ and $\int_{1-c}^{1} \Fe{x} dx$, and then lower the function to restore the integral to 0. By doing this, the
expected estimation error cannot decrease. 
Thus, by Lemma \ref{lem:worstcasetriangles1}, \FE is of the form
$\Fe{x}= \Abs{x-b} + \Fe{b}$, where $b = \argmin_{x \in (c,1-c)} \Fe{x}$.

\item If there is exactly one point $z \in (c,1-c)$ with 
$\Fe{z} = 0$, then \Fe{c} and \Fe{1-c} have opposite signs. Without
loss of generality, assume that $\Fe{c} > 0 > \Fe{1-c}$
and that $z \leq \half$. 
(Otherwise, we could consider $\FeP{x} = \Fe{1-x}$ instead.)

The expected error is maximized when \Fe{c} is as large as possible,
and $\int_z^{1-c} \Fe{x} dx$ is as negative as possible, subject to
the Lipschitz condition and the constraint that $\int_0^1 \Fe{x} dx = 0$. 
Because $z \leq \half$ and the integral of the function
$\FeP{x} = z-x$ is thus negative, by starting from \FEP, then raising the
function in the interval $[1-c,1]$ and, if necessary, increasing \FeP{1-c},
it is always possible to ensure that $\Fe{x} = z-x$
for all $x \in [0,z]$. Then, $\int_z^{1-c} \Fe{x} dx$ is as negative
as possible if, for some value $b$, \FE is of the following form:
$\Fe{x} = -(x-z)$ for $x \leq b$,
and $\Fe{x} = -(b-z) + (x-b) = z+x-2b$ for $x \geq b$.
Thus, \FE overall is of the form $\Fe{x} = \Abs{x-b} - (b-z)$.

\item 
If there are two points $z_1 < z_2 \in (c,1-c)$ with 
$\Fe{z_1} = \Fe{z_2} = 0$, then \Fe{c} and \Fe{1-c} have the same
sign; w.l.o.g., they are both positive.
We distinguish two subcases: 

\begin{itemize}
\item
If $z_2-z_1 \geq (1-c-z_2) + (z_1-c)$,
then the expected error is maximized when $\int_0^{z_1} \Fe{x} dx$ and
$\int_{z_2}^{1} \Fe{x} dx$ are as positive as possible, subject to the 
Lipschitz condition and the constraint that $\int_0^1 \Fe{x} dx = 0$.
Otherwise, we could increase the value of $\int_0^{z_1} \Fe{x} dx$ and 
$\int_{z_2}^{1} \Fe{x} dx$, and then lower the function by some small
resulting $\delta$ to restore the integral to 0. 
If the function is thus lowered by $\delta$, then for the
interval $(z_1,z_2)$, the error increases by $\delta$, while for the
intervals $[c,z_1]$ and $[z_2,1-c]$, it at most decreases by $\delta$.
By the condition $z_2-z_1 \geq (1-c-z_2) + (z_1-c)$, the lowering
would overall increase the error.
Now, applying Lemma \ref{lem:worstcasetriangles1} gives us that w.l.o.g.,
$\Fe{x} = \Abs{x-\frac{z_1+z_2}{2}} - \frac{z_2-z_1}{2}$.
 
\item
If $z_2-z_1 < (1-c-z_2) + (z_1-c)$, then the expected error is maximized 
when $\int_0^{c} \Fe{x} dx$ and
$\int_{1-c}^{1} \Fe{x} dx$ are as negative as possible, subject to the
Lipschitz condition and the constraint that $\int_0^1 \Fe{x} dx = 0$. 
Otherwise, we could decrease the value of $\int_0^{c} \Fe{x} dx$ and
$\int_{1-c}^{1} \Fe{x} dx$, and then raise the function to restore the
integral to 0. An argument just as in the previous case shows
  that the error cannot decrease.
Hence, w.l.o.g., $\Fe{x} = \Fe{c}-(c-x)$ for $x \in [0,c]$ and  
$\Fe{x} = \Fe{1-c}-(x-(1-c))$ for $x \in [1-c,1]$.

We next claim that there must be at least one point $z \in [0,c) \cup (1-c,1]$ 
such that $\Fe{z}=0$. 
For contradiction, assume that \FE is positive in $[0,c) \cup (1-c,1]$. 
Then, $\Fe{0}, \Fe{1} > 0$, and therefore, $\Fe{c}, \Fe{1-c} > c$.
Because $\Fe{z_1}=\Fe{z_2}=0$, this implies that $z_1 > 2c$ and
$z_2 < 1-2c$. But with our choice of $c=2-\sqrt{3}$, this implies that
$z_2 < z_1$, a contradiction. 

Without loss of generality, assume that the interval $(1-c,1]$
contains such a point $z$; define $z_3 = \min \Set{z \in (1-c,1]}{\Fe{z}=0}$. Further, assume that we have applied 
Lemma \ref{lem:worstcasetriangles1} to \FE, such that \FE
maximizes the area in the intervals $[c,z_1], [z_1,z_2]$ and 
$[z_2,1-c]$. 
Consider mirroring the function \FE on the interval $[z_1,z_3]$.
Formally, we define $\FeP{x} = \Fe{z_3-(x-z_1)}$ if $x \in [z_1, z_3]$,
and $\FeP{x} = \Fe{x}$ otherwise. 
Clearly, \FEP is Lipschitz continuous and has the same
integral (namely, zero) as \FE. 

Next, we define a new
function \FEPP by modifying \FEP so that it is as negative as 
possible in the interval $[z_1 + z_3 -z_2,1]$. Formally, we define 
$\FePP{x} = z_1+z_3-z_2-x$ if $x \in [z_1+z_3-z_2, 1]$,
and $\FePP{x} = \FeP{x}$ otherwise. 
(See Figure \ref{fig:mirroring} for an illustration of this 
mirroring, and the resulting shapes of \FEP and \FEPP)

Notice that \FEPP is not normalized to have an 
integral of $0$, since $\int_{z_1+z_3-z_2}^{1} \FePP{x} 
< \int_{z_1+z_3-z_2}^{1} \FeP{x}$. 
However, since (by assumption on the current case)) 
$z_2-z_1 < (1-c-z_2) + (z_1-c)$, 
raising \FEPP to restore the integral to $0$ can only
increase the resulting estimation error, by an argument
  similar to the previous case. The remainder of the 
proof for this case is as follows: We will first prove that the
estimation error of \FEPP is at least as large as the estimation
error of \FE. This implies that even after normalizing 
\FEPP, its estimation error remains at least as large as that 
of \FE. Finally we can use Lemma
\ref{lem:worstcasetriangles1} on the normalized version of \FEPP
to reduce the number of points $x \in (c,1-c)$ with $\FePP{x} = 0$
down to either one or zero, without decreasing the estimation error,
and thus reduce this subcase to one of the previous two cases.

We now compare the estimation error of \FE
against that of \FEPP. 
Simply by definition of \FEPP, we have that
$\int_{z_2}^{1-c} \Abs{\Fe{x}} dx =\int_{z_1}^{z_1+(1-c-z_2)} \Abs{\FePP{x}} dx$,
and $\int_{c}^{z_1} \Abs{\Fe{x}} dx =\int_{c}^{z_1} \Abs{\FePP{x}} dx$.
Furthermore, we have that $\int_{z_1}^{z_2} \Abs{\Fe{x}} dx \leq
\int_{z_1+(1-c-z_2)}^{1-c} \Abs{\FePP{x}} dx$. This follows, since
for any values $p,q$ such that $0 < p < q$, we have $\int_{0}^{q} \Abs{\frac{q}{2} -
\Abs{x-\frac{q}{2}}} dx \leq  \int_{0}^{q} \Abs{p-x} dx$.  
Hence, $\int_{c}^{1-c} \Abs{\FePP{x}} dx \geq
\int_{c}^{1-c} \Abs{\Fe{x}} dx$. 
\end{itemize}

\end{enumerate}

In all three cases, we have thus shown that w.l.o.g., 
$\Fe{x} = \Abs{x-\PEAK} - t$, for some values $\PEAK, t$.
Finally, the normalization $\int_0^1 \Fe{x} dx = 0$ implies that
$t = \half + \PEAK^2 - \PEAK$, completing the proof of Theorem
\ref{thm:worstcasefunction}.

\begin{figure}
\begin{center}
\epsfxsize=11.5cm
\epsffile{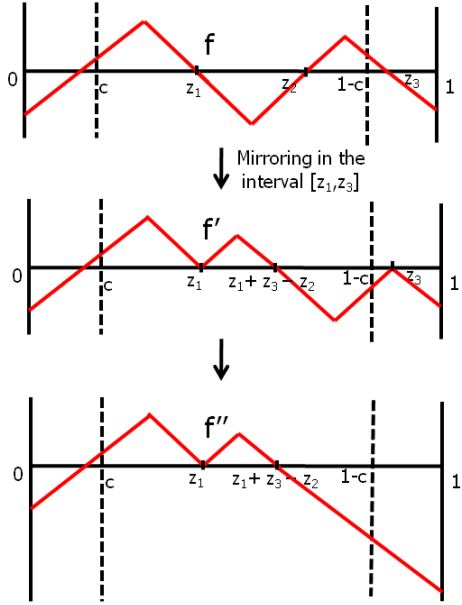}
\caption{Mirroring \FE in $[z_1,z_3]$ \label{fig:mirroring}}
\end{center}
\end{figure}

\section{Future Work}
Our work is a first step toward obtaining optimal (as opposed to
asymptotically optimal) randomized algorithms for choosing $k$ sample locations
to estimate an aggregate quantity of a function \FE. The most obvious
extension is to extend our results to the case of estimating
the average using $k$ samples. It would be interesting whether approximation
guarantees for the $k$-median problem (the deterministic counterpart)
can be exceeded using a randomized strategy.

Also, our precise characterization of the optimal sampling distribution
for functions on the $[0,1]$ interval should be extended to
higher-dimensional continuous metric spaces.
Another natural direction is to consider other aggregation goals,
such as predicting the function's maximum, minimum, or median.
For predicting the maximum from $k$ deterministic samples, a
2-approximation algorithm was given in \cite{das}, which is is best
possible unless P=NP. However, it is not clear if equally good
approximations can be achieved for the randomized case.
For the median, even the deterministic case is open.

On a technical note, it would be interesting whether finding the best
sampling distribution for the single sample case is NP-hard.
While we presented a PTAS
in this paper, no hardness result is currently known.

\subsubsection*{Acknowledgments}
We would like to thank David Eppstein, Bobby Kleinberg and Alex
Slivkins for helpful discussions and pointers, and anonymous referees
for useful feedback on previous versions.

\bibliographystyle{plain}
%\bibliography{names,conferences,auctions}
\bibliography{sampling}

\end{document}